\documentclass[12pt,a4paper, twoside,reqno]{amsart}
\usepackage{amscd}
\usepackage{amsmath,amssymb,amsthm,mathrsfs}
\usepackage[colorlinks=true,linkcolor=blue,citecolor=blue]{hyperref}
\usepackage[margin=3cm]{geometry}
\usepackage{tikz}
\usetikzlibrary{calc}

\usepackage{parskip}
\usepackage{enumerate}
\setlength{\parindent}{15pt}

\newcommand{\dd}[2]{\frac{d #1}{d #2}}

\renewcommand{\H}{\mathcal{H}}
\newcommand{\w}{\widetilde}
\newcommand{\wH}{\widetilde \H }

\newcommand{\pa}{\partial}
\newcommand{\C}{\mathbb C}
\newcommand{\R}{\mathbb R}
\newcommand{\ZZ}{\mathbb Z}

\newtheorem{thm}{Theorem}[section]
\newtheorem*{thm*}{Theorem}
\newtheorem{prop}[thm]{Proposition}
\newtheorem{lem}[thm]{Lemma}

\newtheorem{conj}[thm]{Conjecture}
\theoremstyle{definition}

\theoremstyle{remark}
\newtheorem{remark}[thm]{Remark}
\numberwithin{equation}{section}

\title{Integrability of Limit Shapes of the Inhomogeneous Six Vertex Model}

\author{David Keating}
\address{D.K.: Department of Mathematics, University of California, Berkeley,
CA 94720, USA}
\email{dkeating@berkeley.edu}

\author{Nicolai Reshetikhin}
\address{N.R.: Department of Mathematics, University of California, Berkeley,
CA 94720, USA\\ \& Saint Petersburg University, Russia \\  \& KdV Institute for Mathematics, University of Amsterdam, 1098 XH Amsterdam, The Netherlands .}
\email{reshetik@math.berkeley.edu}

\author{Ananth Sridhar}
\address{A.S.: Google}
\email{ananthsridhar@google.com}

\begin{document}
\begin{abstract}
In this paper we prove that the Euler-Lagrange equations for the limit shape for the inhomogeneous six vertex model on a cylinder have infinitely many conserved quantities.
\end{abstract}
\maketitle

\tableofcontents

\section*{Introduction} 

In the thermodynamic limit, the height function of the homogeneous 6-vertex model develops a limit shape \cite{ZJ,AR,CP,PR}. Moreover,  the limit shape height function can be derived from a variational principle similar to the one proven for bipartite dimer models \cite{CKP}. The Euler-Lagrange equations for this variational principle considered as an evolution equations in Euclidean time admit infinitely many conserved quantities \cite{RS}. For the homogeneous six vertex model on a cylinder the limit shape height function can also be interpreted as a Hamiltonian flow in Euclidean time along the cylinder. The conservation laws are obtained by \cite{RS} in the Hamiltonian framework. They
form a Poisson commutative family of functionals and therefore it is natural to expect that this Hamiltonian system is integrable. 

One can argue that the existence of infinite number of conservation laws for the PDE defining the limit shape is related to the commutativity of transfer matrices and that it is a semiclassical version of this property. The {\it integrable inhomogeneities} that we consider here correspond to shifts of the spectral parameters. They do not change the commutativity 
of transfer-matrices.  This was first observed by Baxter, see \cite{Ba} and was used many times in literature, see for example \cite{FR}.

In this paper we extend the results of \cite{RS} to the case of the six vertex model with integrable inhomogeneities. We prove that Euler-Lagrange equations for critical points of the large deviation functional for this model on a cylinder have infinitely many conserved quantities. As in the homogeneous case we use the Hamiltonian framework. 

Theorem \ref{main1} states the main result on the Poisson commutativity of an infinite family of integrals of motion. The theorem follows from bilinear differential identities
for the semicanonical free energy function. These identities are proven in Theorem \ref{main}.  The proof is based on properties of integral equations involved in the description
of the ground state of the model. These properties are presented in section \ref{ident}. The analysis largely follows \cite{NK,IKR}. The description of the free energy uses standard conjectures about the ground state of the 6-vertex model which were confirmed numerically in numerous cases and in some cases, such as a free fermionic point, are proven.

The organization of the paper is as follows: In section \ref{6vm}, we briefly describe the transfer matrix and partition function of the six vertex model, including a review of the Bethe ansatz for the eigenvectors of the transfer matrix. In section \ref{thlim} we review the structure of the maximal eigenvalue in the thermodynamic  limit.  Integral equations describing the density of the free energy on a torus in the thermodynamic limit are analyzed in section \ref{ident}. In section \ref{sec:4}, we outline the variational principle and the Hamiltonian framework for the limit shape and prove  the main theorem. In the concluding section \ref{concl}, we give concluding remarks and present some open problems.

{\bf Acknowledgements} The work of N.R. was partly 
supported by grants NSF FRG DMS-1664521 and RSF-18-11-00-297. The work of D.K. was 
partly supported by the grant NSF DMS-1902226. N.R. and D.K. are grateful for the hospitality  at ITS ETH where he was visiting 
when the work was completed. N.R. would like to thank A. Borodin, I. Corwin and A. Pronko for helpful
discussions on various aspects of the six vertex model.

\noindent
\section{The six vertex model on a cylinder}\label{6vm}

\subsection{The six vertex model}
Here we recall some basic facts about the six vertex model (see  \cite{Ba}, survey \cite{Res10} and section 1 of \cite{RS} and references therein). Recall that the ice rule implies that the six vertex model can be seen as a set of path which do not cross, but may touch at vertex. Figure \ref{fig:6Vconfig} show the possible states for a single vertex, note that we orient the path up and right. 
\begin{figure}[h]
\begin{tabular}{cccccc}
\begin{tikzpicture}
\draw[->][thick] (-1,0) -- (-.5,0);\draw[thick] (-.5,0) -- (0,0); \draw[->][thick] (0,0) -- (.5,0);\draw[thick] (.5,0) -- (1,0);
\draw[->][thick] (0,-1) -- (0,-.5);\draw[thick] (0,-.5) -- (0,0); \draw[->][thick] (0,0) -- (0,.5);\draw[thick] (0,.5) -- (0,1);
\end{tikzpicture} 
&
\begin{tikzpicture}
\draw[thick] (-1,0) -- (-.5,0);\draw[<-][thick] (-.5,0) -- (0,0); \draw[thick] (0,0) -- (.5,0);\draw[<-][thick] (.5,0) -- (1,0);
\draw[thick] (0,-1) -- (0,-.5);\draw[<-][thick] (0,-.5) -- (0,0); \draw[thick] (0,0) -- (0,.5);\draw[<-][thick] (0,.5) -- (0,1);
\end{tikzpicture} 
&
\begin{tikzpicture}
\draw[->][thick] (-1,0) -- (-.5,0);\draw[thick] (-.5,0) -- (0,0); \draw[->][thick] (0,0) -- (.5,0);\draw[thick] (.5,0) -- (1,0);
\draw[thick] (0,-1) -- (0,-.5);\draw[<-][thick] (0,-.5) -- (0,0); \draw[thick] (0,0) -- (0,.5);\draw[<-][thick] (0,.5) -- (0,1);
\end{tikzpicture} 
&
\begin{tikzpicture}
\draw[thick] (-1,0) -- (-.5,0);\draw[<-][thick] (-.5,0) -- (0,0); \draw[thick] (0,0) -- (.5,0);\draw[<-][thick] (.5,0) -- (1,0);
\draw[->][thick] (0,-1) -- (0,-.5);\draw[thick] (0,-.5) -- (0,0); \draw[->][thick] (0,0) -- (0,.5);\draw[thick] (0,.5) -- (0,1);
\end{tikzpicture} 
&
\begin{tikzpicture}
\draw[->][thick] (-1,0) -- (-.5,0);\draw[thick] (-.5,0) -- (0,0); \draw[thick] (0,0) -- (.5,0);\draw[<-][thick] (.5,0) -- (1,0);
\draw[thick] (0,-1) -- (0,-.5);\draw[<-][thick] (0,-.5) -- (0,0); \draw[->][thick] (0,0) -- (0,.5);\draw[thick] (0,.5) -- (0,1);
\end{tikzpicture} 
&
\begin{tikzpicture}
\draw[thick] (-1,0) -- (-.5,0);\draw[<-][thick] (-.5,0) -- (0,0); \draw[->][thick] (0,0) -- (.5,0);\draw[thick] (.5,0) -- (1,0);
\draw[->][thick] (0,-1) -- (0,-.5);\draw[thick] (0,-.5) -- (0,0); \draw[thick] (0,0) -- (0,.5);\draw[<-][thick] (0,.5) -- (0,1);
\end{tikzpicture} 
\\
\begin{tikzpicture}
\draw[ultra thick] (-1,0) -- (0,0); \draw[ultra thick] (0,0) -- (1,0);
\draw[ultra thick] (0,-1) -- (0,0); \draw[ultra thick] (0,0) -- (0,1);
\end{tikzpicture}
&
\begin{tikzpicture}
\draw (-1,0) -- (0,0); \draw (0,0) -- (1,0);
\draw (0,-1) -- (0,0); \draw (0,0) -- (0,1);
\end{tikzpicture} 
&
\begin{tikzpicture}
\draw[ultra thick] (-1,0) -- (0,0); \draw[ultra thick] (0,0) -- (1,0);
\draw (0,-1) -- (0,0); \draw (0,0) -- (0,1);
\end{tikzpicture}
&
\begin{tikzpicture}
\draw (-1,0) -- (0,0); \draw (0,0) -- (1,0);
\draw[ultra thick] (0,-1) -- (0,0); \draw[ultra thick] (0,0) -- (0,1);
\end{tikzpicture}
&
\begin{tikzpicture}
\draw[ultra thick] (-1,0) -- (0,0); \draw (0,0) -- (1,0);
\draw (0,-1) -- (0,0); \draw[ultra thick] (0,0) -- (0,1);
\end{tikzpicture}
&
\begin{tikzpicture}
\draw (-1,0) -- (0,0); \draw[ultra thick] (0,0) -- (1,0);
\draw[ultra thick] (0,-1) -- (0,0); \draw (0,0) -- (0,1);
\end{tikzpicture}
\\
$w_1$ & $w_2$ & $w_3$ & $w_4$ & $w_5$ & $w_6$
\end{tabular}
\caption{Six vertex configurations.}
\label{fig:6Vconfig}
\end{figure}
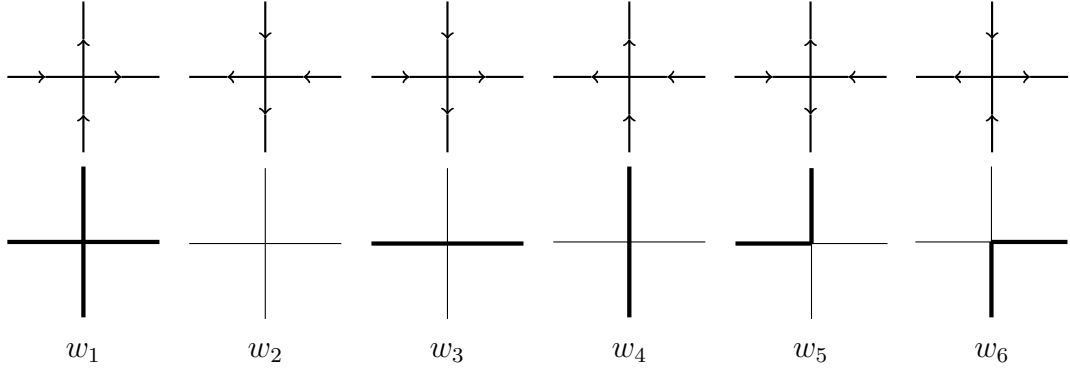

An important parameter of the six vertex model is 
\[
\Delta = \frac{w_1w_2+w_3w_4-w_5w_6}{2\sqrt{w_1w_2w_3w_4}}
\]
Baxter gave the following important parameterizations of the weights for the symmetric six vertex model when $w_1=w_2=a$, $w_3=w_4=b$, and $w_5=w_6=c$.
\begin{enumerate}
\item When $\Delta > 1$:
\begin{enumerate}
\item If $a>b+c$, let $(a,b,c) = ( r\sinh(u+\eta), r\sinh u, r\sinh \eta )$, with $\eta > 0$.
\item If $b>a+c$,  let $(a,b,c) =  ( r\sinh(u-\eta), r\sinh u, r\sinh \eta )$, with $0 < \eta < u$.
\end{enumerate}
Here $\Delta = \cosh \eta$.
\item When $\Delta < -1$: let $(a,b,c) = ( r\sinh(\eta-u), r\sinh u, r\sinh \eta )$ with $0<u<\eta$. Here $\Delta = - \cosh \eta $.
\item When $-1< \Delta < 1$:
\begin{enumerate}
\item If $a>b+c$, let $(a,b,c) =  ( r\sin(u-\gamma),r\sin u, r\sin \gamma )$, with $0 < \gamma < u < \frac{\pi}{2}$. Here $\Delta = \cos \gamma$.
\item If $b>a+c$,  let $(a,b,c) =  ( r\sin(\gamma-u), r\sin u , r\sin \gamma )$, with $0 < u < \gamma < \frac{\pi}{2}$. Here $\Delta = -\cos \gamma$.
\end{enumerate}
\end{enumerate}
where $u$ is called the spectral parameter. 

In the non-symmetric case the weights can be naturally parametrized as 
\begin{equation}
\begin{aligned}
w_1 = ae^{H+V}, & \; w_3 = b e^{H-V}, & \; w_5 = c \lambda \\
w_2 = ae^{-H-V}, & \; w_4 = b e^{-H+V}, & \; w_6 = c \lambda^{-1}
\end{aligned}
\end{equation}
where $H$ and $V$ can be viewed as magnetic (or electric, depending on the interpretation) field. The magnetic fields can be introduced as edge weights in the symmetric model if we assign each occupied horizontal edge a weight of $e^{\frac{H}{2}}$ and each unoccupied $e^{-\frac{H}{2}}$. Similarly, each occupied vertical edge gets a weight of $e^{\frac{V}{2}}$ and each unoccupied $e^{-\frac{V}{2}}$. 

Note that on the cylinder and torus the number of vertices with weight $w_5$ is equal to those with weight $w_6$, so we may set $\lambda=1$ without loss of generality. As we are primarily concerned with these cases, we set $\lambda = 1$ for what follows.

\subsection{The Yang-Baxter equation}
Let $e_1 = \begin{pmatrix} 1\\0 \end{pmatrix}$ and $e_2 = \begin{pmatrix} 0\\1 \end{pmatrix}$ be the standard basis of $\C^2$. To each edge we assign a vector in $\C^2$, with $e_1$ corresponding to an occupied edge, and $e_2$ to an unoccupied edge. Then, in the tensor product basis $e_1\otimes e_1,e_1\otimes e_2,e_2\otimes e_1,e_2\otimes e_2$, we can arrange the six vertex weights into the $4\times 4$ matrix 
\begin{equation}
R(u,H,V) = \begin{pmatrix} ae^{H+V} & 0 & 0 & 0 \\ 0 & b e^{H-V} & c & 0 \\ 0 & c & b e^{-H+V} & 0 \\ 0 & 0 & 0 & ae^{-H-V} \end{pmatrix}
\end{equation}
Here we explicitly write the dependence on the spectral parameter and magnetic fields. Let $R(u) = R(u,0,0)$. We can write $R(u,H,V)$ as
\begin{equation} \label{eq:RuHV}
R(u,H,V) = \left(D^H \otimes D^V\right) R(u) \left( D^H \otimes D^V \right)
\end{equation}
where $D^H = \begin{pmatrix} e^{\frac{H}{2}} & 0 \\ 0 & e^{-\frac{H}{2}} \end{pmatrix}$. The ice rules imply the identity
\begin{equation} \label{eq:Rid1}
\left(D\otimes D\right) R(u,H,V) = R(u,H,V)\left(D\otimes D\right)
\end{equation}
for any diagonal matrix $D$.

In \cite{Ba} Baxter showed that the R-matrix satisfies the Yang-Baxter equation
\begin{equation} \label{eq:YB1}
R_{12}(u) R_{13}(u+v) R_{23}(v) = R_{23}(v) R_{13}(u+v) R_{12}(u)
\end{equation}
in $\C^2\otimes\C^2\otimes\C^2$, where the subscripts indicate which factors of the tensor product the matrix acts on and $R(u) = R(u,0,0)$.  This along with equations (\ref{eq:RuHV}) and (\ref{eq:Rid1}) imply that
\begin{equation} \label{eq:YB2}
R_{12}(u) R_{13}(u+v,H,0) R_{23}(v,H,0) = R_{23}(v,H,0) R_{13}(u+v,H,0) R_{12}(u)
\end{equation}

\subsection{The transfer matrix and partition function} 
Using the $R$-matrix above, we can construct the partition function for the six vertex model on a cylinder.

Consider the inhomogeneous 6-vertex model on a cylinder with inhomogeneity parameters $v_k, k=1, \cdots, N$ corresponding to vertical lines of the lattice. That is, for a single row the spectral parameter at site $k$ is given by $u-v_k$. Construct that quantum monodromy matrix $T_a(u,\{v_k\},H,0):\C^2\otimes (\C^2)^{\otimes N} \to \C^2\otimes (\C^2)^{\otimes N}$ by 
\begin{equation}\label{tr-matrix}
\begin{aligned}
T_a(u,\{v_k\}, H,0) &= D^{2H}_aR_{1a}(u - v_1)D^{2H}_a R_{2a}(u-v_2) \cdots D^{2H}_aR_{Na}(u-v_N)  \\ &=\begin{pmatrix}A(u)&B(u)\\C(u)&D(u)\end{pmatrix}.
\end{aligned}
\end{equation}
Here, the first factor is enumerated by $a$, others by $1,\cdots, N$. The matrix elements of $T_a(u, H,0)$ can be thought of as weight of the configuration on a single row with given boundary conditions. Taking the trace over the first factor and adding a vertical magnetic field we have the row-to-row transfer matrix $t(u,\{v_k\}, H, V): (\C^2)^{\otimes N}\to (\C^2)^{\otimes N}$
\begin{align} \label{eq:transfermatrix}
t(u,\{v_k\}, H, V) = (D^{2V}_1\cdots D^{2V}_N\; )\text{Tr}_a \; T_a(u,\{v_k\}, H, 0).
\end{align}
The elements of this can be seen as the weight of a configuration of a single row on the cylinder with specified boundary conditions and magnetic fields $H$ and $V$. Finally, we can construct the partition function on a cylinder with $M$ row with free boundary conditions and inhomogeneity parameters $u_i, i=1,\cdots, M$ corresponding to the horizontal lines of the lattice
\begin{align*}
Z_{MN}^{cyl}(\{u_i\},\{v_k\},H,V) = t(u_1,\{v_k\},H,V) t(u_2,\{v_k\},H,V) \cdots t(u_M,\{v_k\},H,V).
\end{align*}
The partition function with boundary states $\xi_1$ and $\xi_2$ is
{\small
\begin{align*}
Z_{{MN}, \xi_1, \xi_2 }^{cyl}(\{u_i\},\{v_k\},H,V) = (\xi_1, \; t(u_1,\{v_k\},H,V) t(u_2,\{v_k\},H,V) \cdots t(u_M,\{v_k\},H,V) \xi_2 ).
\end{align*}
}
where $(x,y)$ is the natural scalar product on $(\C^2)^{\otimes N}$. Note that equations (\ref{eq:Rid1}) and (\ref{eq:transfermatrix}) imply that in the case of the cylinder the vertical magnetic field only contributes an overall factor of $e^{M(N-2n)V}$ to the partition function, where $n$ is number of vertical edges occupied by paths in each row. Because of this we will focus on the partition function of a cylinder with $V=0$. 

If we then take the trace over $ (\C^2)^{\otimes N} $, we get the partition function for the $M\times N$ torus
\begin{align*}
Z_{{MN}}^{torus}(\{u_i\},\{v_k\},H,V) = Tr\left( Z_{MN}^{cyl}(\{u_i\},\{v_k\},H,V) \right).
\end{align*}
Note that 
\begin{align*}
Z_{{MN}}^{torus}(\{u_i\},\{v_k\},H,V) = \sum_{n=0}^N e^{M(N-2n)V} Z_{{MN}}^{torus, n}(\{u_i\},\{v_k\},H).
\end{align*}
According to the terminology in statistical mechanics, it is natural to call $Z_{{MN}}^{torus}(u,H,V)$ the grand canonical partition function, where we sum over all possible topological configurations of paths on the torus. The partition function $Z_{{MN}}^{torus, n}(u,H)$ in this sense should be called the semigrand canonical partition function, where the number of paths crossing a horizontal cycle is fixed by $n$.

\subsection{The spectrum of transfer-matrices for finite $N$}
Here we assume $ \Delta=\frac{(a^2+b^2-c^2)}{2ab} < -1 $. In this region Baxter's parametrization is 
\begin{align*}
a = \sinh(\eta-u), \; \; b = \sinh u, \;\ \; c = \sinh \eta.
\end{align*}
In this parametrization $\Delta = -\cosh \eta$ and $0<u<\eta$. Later we will comment on other values of $\Delta$.

The following construction is known as an algebraic Bethe ansatz\footnote{ The idea of using a superposition  of plane waves as an eigenfunction for the Heisenberg spin Hamiltonian (which is the the logarithmic derivative of the transfer-matrix of the homogeneous 6-vertex model at $u=0$) goes back to H. Bethe \cite{B}. It was first applied to the 6-vertex model by E. Lieb in \cite{L} for zero magnetic fields. Shorty after C.P. Yang \cite{CPY} applied it to the asymmetric 6-vertex model (with magnetic fields). The algebraic form we use is due to L. Faddeev and L. Takhtajan \cite{FT}.}. It states that if $\{\alpha_j\}$ satisfy the Bethe equations
\begin{equation} \label{eq:BE}
\begin{aligned}
\prod_{k = 1}^N \frac{\sinh( \frac{\eta}{2}+ i \alpha_j - v_k )}{\sinh( \frac{\eta}{2} - i \alpha_j + v_k  ) } = e^{2 H N } \prod_{m=1,m \neq j }^n \frac{\sinh( i (\alpha_j -  \alpha_m) + \eta )}{\sinh( i (\alpha_j -  \alpha_m) - \eta ) }
\end{aligned}
\end{equation}
then the vector
\begin{equation}
\begin{aligned}
B(\alpha_1)\dots B(\alpha_n) \;  \Omega,
\end{aligned} \begin{aligned} \hspace{20pt} \Omega=\begin{pmatrix} 0 \\ 1 \end{pmatrix}\otimes \cdots
\otimes \begin{pmatrix} 0 \\ 1 \end{pmatrix}
\end{aligned}
\end{equation}
is an eigenvector of the transfer-matrix (\ref{eq:transfermatrix}) with the eigenvalue
\begin{equation}\label{eq:Lambda}
\begin{aligned}
\Lambda(u,\{v_k\},H,0) =&  \;e^{N H} \prod_{k = 1}^N \sinh( \eta - u + v_k )  \prod_{j=1}^n \frac{\sinh(  \frac{\eta}{2} + u - i \alpha_j )}{\sinh(  \frac{\eta}{2} - u + i \alpha_j  ) }  \\&+  e^{-N H} \prod_{k = 1}^N \sinh( u - v_k)  \prod_{j=1}^n \frac{\sinh(\frac{3 \eta}{2} - u + i \alpha_j )}{\sinh( u - \frac{\eta}{2} - i \alpha_j  ) }.
\end{aligned}
\end{equation}
This is the description of the spectrum of the transfer-matrix for the model in the horizontal magnetic field $H$ with inhomogeneities $ \{ v_k \} $ for $\Delta<-1$.  

Define functions $ p $ and $ \Theta $ defined by,
\begin{equation} \label{def:pTheta}
\begin{aligned} 
e^{i p(\alpha) }  &= \frac{e ^{\eta + 2 i \alpha} - 1 }{e^{\eta} - e^{2 i \alpha} } \\ &= \frac{\sinh( \frac{\eta}{2} + i \alpha )}{\sinh( \frac{\eta}{2} - i \alpha ) }, \\
e^{i \Theta (\alpha- \beta) } &= \frac{1 + e^{i p(\alpha) + i p(\beta) } - 2 \Delta e ^{i p(\alpha) } }{1 + e^{i p(\alpha) + i p(\beta) } - 2 \Delta e ^{i p(\beta)}} \\ &= - \frac{\sinh(i \alpha - i \beta + \eta) }{ \sinh( i \alpha - i \beta - \eta) }.
\end{aligned}
\end{equation}
The following identities can be checked directly:
\begin{equation} \label{eq:kernsym}
\begin{aligned}
p(\alpha^*) &= p(\alpha)^*\\
\Theta(\alpha^*) &= \Theta(\alpha)^*
\end{aligned} \; \; \; \; \; \;
\begin{aligned}
p(-\alpha) &= -p(\alpha) \\
 \Theta(-\alpha) &= - \Theta(\alpha)
\end{aligned}
\end{equation}
Here $a^*$ is the complex conjugate of $a$.
In terms of these functions Bethe equations can be conveniently written as
\begin{align*}
\prod_{k=1}^N e^{i p (\alpha_j + i v_k ) } = e^{2 H N } \prod_{m=1,m \neq j}^n \left( - e^{i \Theta( \alpha_j-\alpha_m ) } \right).
\end{align*}

\begin{remark} The spectrum of the transfer-matix for other values of $\Delta$ can be obtained  similarly, using the analytical continuation.
In terms of functions $a(u),b(u),c(u)$ parametrizing weights of the model, the eigenvalues of the transfer matrix for all values of $\Delta$ can be written as 
\[
\Lambda(u,\{v_k\},H,0) = e^{NH} \prod_{k=1}^N a(u-v_k)  \prod_{i=1}^n \frac{a(\lambda_i - u)}{b(\lambda_i-u)} + e^{-NH} \prod_{k=1}^N b(u-v_k) \prod_{i=1}^n \frac{a(u-\lambda_i)}{b(u-\lambda_i)}
\]
encompassing any choice of parametrization, where the $\lambda_j$ solve the Bethe equations
\[
\prod_{k=1}^N \frac{b(\lambda_j-v_k)}{a(\lambda_j-v_k)} = e^{2NH} (-1)^n\prod_{i=1,i\ne j}^n \frac{a(\lambda_i-\lambda_j)}{a(\lambda_j-\lambda_i)}.
\]
Equations (\ref{eq:Lambda}) and (\ref{eq:BE}) follow from choosing the parametrization for $\Delta < -1$ and letting $\lambda_j = i\alpha_j + \frac{\eta}{2}$. For details, see, for example, \cite{KBI}.
\end{remark}

Introduce notation 
\begin{equation}\label{def:psipm}
\begin{aligned}
e^{\psi_+ (\alpha + i u) } &= \frac{e^{\eta+2 u } - e^{2 i \alpha} }{ e^{ \eta - 2 i \alpha } - e^{2 u} }= \frac{\sinh( \frac{\eta}{2} + u - i \alpha )}{\sinh( \frac{\eta}{2} - u + i \alpha ) } \\
e^{\psi_- (\alpha + i u) } &= \frac{  e^{2 \eta + 2 i \alpha} - e^{2 u - \eta}}{e^{2 u} - e^{\eta + 2 i \alpha} }= \frac{\sinh(  \frac{3 \eta}{2} - u + i \alpha ) }{\sinh(u - \frac{\eta}{2} - i \alpha ) }.
\end{aligned}
\end{equation}

Then the formula for eigenvalues of in terms of solutions to Bethe equations can be written as
\begin{equation} \label{eq:eigenvalue}
\begin{aligned}
\Lambda(u,\{v_k\},H,0) =& \; e^{N H } \prod_{k = 1}^N \sinh( \eta - u + v_k)   \prod_{j=1}^n e^{\psi_+( \alpha_j + i u ) } \\ & + e^{-N H } \prod_{k = 1}^N \sinh( u - v_k) \prod_{j=1}^n e^{\psi_-(\alpha_j + i u) }.
\end{aligned}
\end{equation}

\section{The thermodynamic limit}\label{thlim}
In this section we will describe the asymptotic of the partition function when $N,M\to\infty$ using the analysis of the Bethe equations in the limit $N\to \infty$. We will first describe the case of homogeneous weights.

\subsection{The ground state for finite $N$}
Recall the Bethe equations are
\begin{align*}
\prod_{k=1}^N e^{i p (\alpha_j + i v_k) } = e^{2 H N} (-1)^{n-1} \prod_{m=1,m \neq j }^n e^{i \Theta( \alpha_j - \alpha_m ) }.
\end{align*}
Choosing branches of logarithms in the definition of $p(\alpha)$ and $\Theta(\alpha-\beta)$ we can write them as
\begin{align*}
\sum_{k=1}^N \frac{1}{N} p(\alpha_j + i v_k) = -2 i H + \frac{2 \pi I_j}{N} + \sum_{m=1,m \neq j }^n \frac{1}{N}\Theta(\alpha_j - \alpha_m )
\end{align*}
where $I_j$ are integers for odd $n$ and half-integers for even $n$.

The first basic conjecture in the Bethe ansatz description of the largest eigenvalue of the transfer
matrix, in the subspace with fixed $n$, states that
\begin{conj} \label{conj:Betheroots1}
When the inhomogeneities and $H$ are sufficiently small, or when the system is homogeneous, the maximal eigenvalue corresponds to solutions of Bethe equations
with
\begin{align*}
 I_j = \frac{n+ 1 - 2 j }{2}, \hspace{15pt} j = 1, \cdots, n
\end{align*}
\end{conj}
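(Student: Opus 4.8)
The plan is to combine a Perron--Frobenius argument, which fixes the analytic character of the maximal eigenvector, with a convexity (Yang--Yang type) analysis of the logarithmic Bethe equations, which pins down the quantum numbers $I_j$; the extension to nonzero $H$ and small $v_k$ is then a deformation argument. I treat the homogeneous case $v_k=0$ first, since there the tools are classical, and only afterward switch on small inhomogeneities.

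First I would establish positivity. In the regime under consideration, Baxter's parametrization $a=\sinh(\eta-u)$, $b=\sinh u$, $c=\sinh\eta$ with $0<u<\eta$ makes all three weights strictly positive, so the monodromy matrix \eqref{tr-matrix} and hence the transfer matrix \eqref{eq:transfermatrix} have nonnegative entries. Since the transfer matrix preserves the number $n$ of occupied vertical edges, it restricts to each fixed-$n$ sector, and the diagonal factors $D^{2H}=\mathrm{diag}(e^{H},e^{-H})$ are positive and do not spoil nonnegativity. After checking that this restriction is primitive --- i.e.\ some power has strictly positive entries, which holds for $u$ in the interior of its range and generic $v_k$, since then configurations can be shuffled into one another --- the Perron--Frobenius theorem gives that the eigenvalue of largest modulus in the sector is real, positive, and simple, with a strictly positive eigenvector. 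This is the eigenvalue that must be matched to a Bethe state.

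Next I would identify the $I_j$ in the reference case $H=0$, $v_k=0$. Writing the Bethe equations in the logarithmic form displayed above the conjecture, I would exhibit a potential $\Phi(\alpha_1,\dots,\alpha_n)$ whose stationarity conditions are exactly those equations, and show that $\Phi$ is strictly convex on real configurations; this uses the definite sign of the derivative of $\Theta$ in the regime $|\Delta|>1$, as can be read off from \eqref{def:pTheta}. Strict convexity forces a unique real critical point for each admissible choice of $I_j$, and the assignment $I_j=\tfrac{n+1-2j}{2}$ is precisely the choice of consecutive half-integers symmetric about the origin. Using the oddness relations $p(-\alpha)=-p(\alpha)$, $\Theta(-\alpha)=-\Theta(\alpha)$ from \eqref{eq:kernsym}, this symmetric set of quantum numbers yields a solution whose roots satisfy $\alpha_{n+1-j}=-\alpha_j$, and substituting such a configuration into \eqref{eq:eigenvalue} gives a real, positive eigenvalue. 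To see that it is the \emph{largest}, I would pass to the density description: $\tfrac1N\log\Lambda$ converges to a free-energy functional of the limiting root density, which the symmetric, hole-free (``Fermi sea'') configuration maximizes because the associated linear integral equation, following the analysis of \cite{NK,IKR,KBI}, has a unique solution supported symmetrically about the origin. Matching this with the simple Perron eigenvalue from the previous step identifies the two.

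Finally I would turn on the parameters. For the homogeneous system the dependence on $H$ is the classical analysis of the symmetric chain in a field, where convexity persists and the symmetric consecutive $I_j$ continue to describe the maximal eigenvalue; this settles the ``homogeneous'' clause. The genuinely new ingredient is switching on small inhomogeneities $v_k$: by the implicit function theorem the real root configuration and the eigenvalue vary smoothly, and the integers $I_j$, being discrete, cannot change as the parameters are switched on continuously, provided no two roots collide and no real root migrates off the real axis. The main obstacle is exactly this last point --- controlling, uniformly in $N$, that the maximizing solution keeps all roots real and hole-free and that no competing quantum-number configuration overtakes it once $H$ and the $v_k$ are no longer infinitesimal. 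This is the substance of the (generally unproven) completeness and non-string properties of the ground-state Bethe roots, which is why the statement is posed as a conjecture; at the free-fermion point the Jordan--Wigner mapping to noninteracting fermions makes the symmetric consecutive $I_j$ manifest and the claim becomes a theorem.
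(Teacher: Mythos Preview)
The statement you are trying to prove is explicitly labelled a \emph{conjecture} in the paper, and the paper offers no proof whatsoever: after stating it, the authors simply remark that ``this conjecture has a long history'' and cite \cite{YY,CPY,SYY,LW,Ba,N,BM,GKW} for partial results and numerical evidence. There is therefore nothing for your argument to be compared against; the paper treats Conjecture~\ref{conj:Betheroots1} (together with Conjecture~\ref{conj:Betheroots2}) as a standing hypothesis on which the free-energy analysis of Sections~\ref{thlim}--\ref{ident} rests.

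Your sketch is a reasonable outline of the folklore strategy --- Perron--Frobenius positivity to isolate the top eigenvector, Yang--Yang convexity to pin the quantum numbers, then deform in $H$ and $v_k$ --- and you correctly identify the real obstruction: controlling, uniformly in $N$, that the maximizing Bethe solution stays real and hole-free and that no other branch of quantum numbers overtakes it under deformation. This is exactly the open problem; your final paragraph concedes as much. So what you have written is not a proof but an honest restatement of why the claim is plausible and where it is hard. One additional technical point: primitivity of the transfer matrix in a fixed-$n$ sector is not automatic for all $(u,H,\{v_k\})$ and needs its own argument, and even granting Perron--Frobenius you still need completeness of the Bethe vectors in that sector to conclude that the Perron eigenvector is a Bethe vector at all --- a step your sketch elides.
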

This conjecture has a long history. Perhaps, the first detailed study of this conjecture, applied to the Heisenberg XXZ spin chain was done in \cite{YY} where the authors also give an
account of prior results. It was used to characterize the ground state when $N\to\infty$ and to compute the free energy in this limit in \cite{CPY}\cite{SYY}.  An account of this and other works on the 6-vertex model can be found in a survey \cite{LW} and in \cite{Ba}.  Among more recent results are:  a detailed exposition of results from \cite{CPY}\cite{SYY}
was done in \cite{N}, for a rather detailed study of analytic properties of solutions to Bethe equations for SOS model with twisted boundary conditions see \cite{BM}, an explicit description of the ground state energy and the free energy for the 5-vertex model was found in \cite{GKW}.

\subsection{The ground state in the limit $N\to \infty$, homogeneous case} 
The following conjecture describes the behavior of the solution to the Bethe equations corresponding
to the maximal eigenvalue of the transfer-matrix in the  limit when $N\to \infty$ and the ration $n/N$ is fixed.
\begin{conj} \label{conj:Betheroots2}
Fix the ratio $\frac{n}{N}$. As  $N \rightarrow \infty $, the roots of Bethe equations  corresponding to $I_j$ from conjecture \ref{conj:Betheroots1} become distributed along a contour $ C $ which is described below.
\end{conj}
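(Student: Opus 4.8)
The statement is of the type established for the Heisenberg chain and the homogeneous six vertex model by the counting-function method, and the plan is to follow that route, reading off the contour $C$ as the support of the solution of a linear integral equation. First I would rewrite the logarithmic Bethe equations stated above by introducing the counting function
\[
z_N(\alpha) \;=\; p(\alpha) + 2iH - \frac{1}{N}\sum_{m=1}^{n}\Theta(\alpha-\alpha_m),
\]
so that, since $v_k=0$ in the homogeneous case and $\Theta(0)=0$ by \eqref{eq:kernsym}, the equations become simply $z_N(\alpha_j)=\frac{2\pi I_j}{N}$. Because the indices $I_j$ of Conjecture \ref{conj:Betheroots1} are consecutive with unit spacing, adjacent roots satisfy $z_N(\alpha_{j+1})-z_N(\alpha_j)=\pm\frac{2\pi}{N}$, so the number of roots in a short arc of length $d\alpha$ is $\frac{N}{2\pi}\,|z_N'(\alpha)|\,d\alpha$. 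Thus, \emph{if} the roots accumulate on a curve, their limiting linear density must be $\rho(\alpha)=\frac{1}{2\pi}\,\frac{dz}{d\alpha}$ (up to the orientation of the contour), where $z=\lim_N z_N$.

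Next I would pass to the limit $N\to\infty$ at fixed $n/N$. Assuming condensation of the roots to a measure $\rho(\beta)\,d\beta$ supported on a contour $C$, the empirical sum converges,
\[
\frac{1}{N}\sum_{m=1}^{n}\Theta(\alpha-\alpha_m)\;\longrightarrow\;\int_C \Theta(\alpha-\beta)\,\rho(\beta)\,d\beta,
\]
giving $z(\alpha)=p(\alpha)+2iH-\int_C\Theta(\alpha-\beta)\rho(\beta)\,d\beta$. Differentiating the relation $2\pi\rho=z'$ yields the linear integral equation
\[
2\pi\rho(\alpha)+\int_C \Theta'(\alpha-\beta)\,\rho(\beta)\,d\beta \;=\; p'(\alpha),
\]
together with the normalization $\int_C\rho(\beta)\,d\beta=\frac{n}{N}$; this is the classical Lieb density equation, and in the translation-invariant case (full period contour) it is solved explicitly by Fourier transform. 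The reflection symmetries $p(-\alpha)=-p(\alpha)$ and $\Theta(-\alpha)=-\Theta(\alpha)$ of \eqref{eq:kernsym} force $\rho$ to be even and $C$ to be symmetric under $\alpha\mapsto-\alpha$; together with the reality relations in \eqref{eq:kernsym} they constrain $C$ to run along the real axis (or a reflection-symmetric curve once $H\neq0$), and its endpoints, hence its precise shape, are pinned down by the normalization together with positivity of $\rho$. This self-consistent equation is exactly what is meant by the contour ``described below,'' and carrying out its solution produces the explicit $C$.

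The genuinely hard part is justifying the condensation hypothesis on which everything above rests: that the specific roots selected by Conjecture \ref{conj:Betheroots1} really do converge to a smooth measure on a well-defined contour, rather than splitting off isolated strings or developing oscillations near the endpoints. A rigorous argument would require (i) proving that $z_N$ is monotone along the relevant curve, so that the labeling by the consecutive $I_j$ determines the roots uniquely and the counting step is valid; (ii) uniform control of the error in replacing the Riemann sum $\frac{1}{N}\sum_m\Theta(\alpha-\alpha_m)$ by its integral, which is delicate because $\Theta'$ carries singularities that approach the contour; and (iii) in the inhomogeneous extension, where $p(\alpha)$ is replaced by $\int p(\alpha+iv)\,d\nu(v)$, and for $-1<\Delta<1$, control of the analytic continuation that moves the roots off the real axis, so that $C$ is a genuine complex curve characterized self-consistently as the level set where $\mathrm{Im}\,z$ is constant. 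I expect step (iii)—the self-consistent determination and analytic continuation of the contour—to be the principal obstacle; it is precisely because these uniform estimates are unavailable in general that the statement is posed as a conjecture and verified numerically rather than proved.
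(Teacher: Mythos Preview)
The paper does not prove this statement: it is explicitly a \emph{conjecture}, assumed without proof and used as the basis for the subsequent analysis. What the paper does immediately after stating it is exactly the heuristic passage to the continuum that you outline---introduce $t_j=I_j/N$, assume the roots condense to a function $\alpha(t)$ on $[-q/2,q/2]$, replace the sum by an integral to obtain the nonlinear equation for $t(\alpha)$, and differentiate to get the linear density equation $2\pi\rho=p'-K*\rho$. Your counting-function formulation is an equivalent repackaging of the same derivation, and you correctly recognise that the substantive claim---condensation of the specific roots with consecutive $I_j$ onto a contour---is precisely what is being \emph{assumed}, not established. In that sense your proposal and the paper agree completely: neither is a proof.

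Two small corrections to your symmetry discussion. First, the relevant involution the paper uses is $\alpha\mapsto-\alpha^{*}$ (reflection across the imaginary axis), not $\alpha\mapsto-\alpha$; for $H\neq0$ the contour $C$ is genuinely complex and satisfies $-C^{*}=C$, with endpoints $B=-A^{*}$, so $\rho$ is not even but satisfies $\rho(\alpha)^{*}=\rho(-\alpha^{*})$. Second, the contour is not determined by positivity of $\rho$ in the usual sense but by the requirement that $\rho(\alpha)\,d\alpha$ be real along $C$ (equivalently, that $t(\alpha)$ be real there); the integral equation itself is insensitive to homotopies of $C$ avoiding the poles of $K$, and the normalisation $\int_C\rho=q$ together with the reality condition is what pins down the endpoints.
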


To describe $C$, let us we introduce
\begin{equation}
\begin{aligned}
2 \pi t_j = 2 \pi \frac{I_j}{N}
\end{aligned} \hspace{20pt}
\begin{aligned}
 -\frac{q}{2} \leq t_j \leq \frac{q}{2}
\end{aligned}
\end{equation}
where $q=n/N$. We will write $\alpha(t_j)$ for $\alpha_j$.

As $ N \rightarrow \infty $ with $ q $ fixed, according to the conjecture \ref{conj:Betheroots2}, the roots $\alpha(t_j)$ of Bethe equations form a complex-valued function $ \alpha(t ) $, $t\in [-\frac{q}{2},\frac{q}{2}]$. The Bethe equations become the non-linear integral equation for $\alpha(t)$
\begin{align}
2 \pi t = p( \alpha(t) ) + 2 H i - \int_{- q / 2}^{q /2 } \Theta( \alpha( t)  - \alpha(s) ) \; ds.
\end{align}
The image of the function $\alpha$ is the contour $C$. This contour connects the endpoints $A=\alpha(-q/2)$ and $B=\alpha(q/2)$. Note that $A=A(q,H)$ depends on both the density and the magnetic field (and similarly for $B$).

Conversely, denote by $ t(\alpha): C \rightarrow  [-\frac{q}{2},\frac{q}{2}] $ the inverse of $ \alpha(t) $. The function $ t $ can be analytically continued off the contour, yielding a complex valued function $ t(\alpha)$ which analytic away from singularities  and branch cuts. Define $\rho$ to be the density of roots of Bethe equations along $ C $. We have
\begin{align} \label{eq:partialt}
\rho(z) = \frac{\partial t(z)}{\partial z } \Big|_{z \in C}.
\end{align} Note that $\text{Im}\big(\rho(z) \; dz|_C \big) = 0$ and
\begin{align} \label{eq:norm}
\int_A^B \; \rho(\alpha) \; d\alpha = q.
\end{align}

From the Bethe equations we obtain the equation for $t(\alpha)$ given by
\begin{align} \label{eq:star}
2 \pi t(\alpha) = p(\alpha) + 2 i H - \int_{A}^{B} \Theta(\alpha- \beta) t'(\beta) d \beta.
\end{align}
Here $\alpha\in C$ and the integral is taken along $C$.

Note that $|\text{Re}(t(\alpha))|\leq  \frac{q}{2}$ and $t(B) =  \frac{q}{2}  , t(A) =  -\frac{q}{2}$. 

Suppose $\alpha$ is on the contour so that $t(\alpha)$ is real. Conjugating equation (\ref{eq:star}), we have 
\begin{align*}
2 \pi t(\alpha)^* &= p(\alpha)^* - 2 i H + \int_{C} \Theta(\alpha - \gamma)^* t'(\gamma)^* d \gamma^* \\ &= p(\alpha^*) - 2 i H +\int_{C} \Theta(\alpha^* - \gamma^*) t'(\gamma^*) d \gamma^*   \\ &=  -p(-\alpha^*) - 2 i H + \int_{C^*} \Theta(\alpha^* - \gamma) t'(\gamma) d \gamma 
\\ &=  -p(-\alpha^*) - 2 i H - \int_{C} \Theta(\alpha^* + \gamma) t'(-\gamma) d \gamma \\
 &=  -p(-\alpha^*) - 2 i H + \int_{C} \Theta(-\alpha^* - \gamma) t'(-\gamma) d \gamma \\&= - 2 \pi t(-\alpha^* )
\end{align*}
where we used the the fact that we extended $t$ analytically so that $t'(\gamma)^*=t'(\gamma^*)$. Thus we see that $-\alpha^*$ lies on the contour and $t(\alpha) = - t(-\alpha^* )  $. It follows that the roots are distributed symmetrically with respect to reflecting across the imaginary axis. In particular, $B=-A^*$.

Differentiating equation (\ref{eq:star}) we obtain an integral equation for $\rho(\alpha)$ given by
\begin{equation} \label{eq:inteqrho}
2 \pi \rho(\alpha) = p'(\alpha) - \int_{C} K(\alpha - \beta) \rho(\beta) d \beta
\end{equation}
where $ K = \Theta' $.  Note that the kernel in the integral equation (\ref{eq:inteqrho}) is a meromorphic function of $\alpha$. This means that the contour can be deformed as long as it does not cross a pole of $K(\alpha-\beta)$. The condition (\ref{eq:partialt}) selects a representative of the equivalence class of continuous deformations of the contour, such that $\rho(\alpha) \; d\alpha$ is a positive density.

Given the solution to (\ref{eq:inteqrho}), the equation (\ref{eq:star}) defines $t(\alpha)$.

\subsection{Maximal eigenvalue in the thermodynamic limit}
Recall the form of the eigenvalues (\ref{eq:eigenvalue}) of the homogeneous transfer matrix 
\[
\begin{aligned}
\Lambda(u,H,0) =\Lambda_+(u,H,0) + \Lambda_-(u,H,0)
\end{aligned} 
\]
where
\[
\begin{aligned}
\Lambda_+(u,H,0) = &  e^{N H } \sinh( \eta - u )^N   \prod_{j=1}^n e^{\psi_+( \alpha_j + i u ) }
\end{aligned}
\]
and
\[
\begin{aligned}
\Lambda_-(u,H,0) = & e^{-N H }\sinh( u )^N \prod_{j=1}^n e^{\psi_-(\alpha_j + i u) }.
\end{aligned}
\]
For generic values of $H$, only one of $\Lambda_\pm$ contributes in the thermodynamic limit.

Define
\begin{align*}
\psi_{u}^{(\pm)}(\alpha) &= \psi_{\pm}(\alpha+ i u )
\end{align*}
where $\psi_{\pm}$ are defined by equation (\ref{def:psipm}). These functions satisfy
\begin{align*}
\psi_{u}^{(\pm)}(\alpha)^* &= \psi_{u}^{(\pm)}(-\alpha^*).
\end{align*}
Define as well 
\begin{equation}
\begin{aligned}
l_{+} = \ln \sinh( \eta - u ),
\end{aligned} \; \; \; \;
\begin{aligned}
l_- = \ln \sinh u,
\end{aligned}
\end{equation}
where we recall that in our parametrization $0<u<\eta$.

Fix $H$ and the ratio $ q = \frac{n}{N}$. As $N \rightarrow \infty $, one of the two terms, $\Lambda_+$ or $\Lambda_-$ dominates and for
the largest eigenvalue of the transfer matrix is
\begin{align*}
\Lambda_{\text{max} }(u, q, H)  = e^{ N \mathcal{H}_u( q, H ) }(1+o(1)),
\end{align*}
where the function $\mathcal{H}_u( q, H )$ is the semigrand canonical free energy of the six vertex model
\begin{align} \label{eq:semigrand}
\mathcal{H}_u(q, H ) = \max_{\pm} \mathcal{H}_u^{\pm}(q, H )
\end{align}
where
\begin{align} \label{eq:HqH}
\mathcal{H}_u^{\pm}(q, H ) =  \pm H + l_{\pm} + \int_{C} \psi_u^{\pm}(\alpha) \rho(\alpha) d\alpha
\end{align}
is the limit of  $\frac{1}{N}\ln(\Lambda_\pm(u,q,H))$ when $q$ is fixed, and $ \rho(\alpha) $ is the density of Bethe roots along the contour $ C $ where they concentrate in the limit $ N \rightarrow \infty $.

\subsection{The free energy of the six vertex model on a torus}
Recall that grand canonical partition function of the six vertex model on a torus can be written as
\begin{equation}\label{ZNM}
Z_{{MN}}^{torus}(u,H,V) = \sum_{n=0}^N e^{M(N-2n)V} Z_{{MN}}^{torus, n}(u,H)
\end{equation}
where $Z_{{MN}}^{torus, n}(u,H)$ is the semigrand canonical partition function. The semigrand canonical partition function itself may be written as
\[
Z_{{MN}}^{torus, n}(u,H) = \sum_{\{\alpha_i\}} (\Lambda_{\{\alpha_i\}}(u,H))^M
\]
where the sum is over all collections $\{\alpha_1,\ldots,\alpha_n\}$ that solve the Bethe equations, and $ \Lambda_{\{\alpha_i\}}$ is the eigenvalue of the transfer matrix $t(u,H,0)$ corresponding to these Bethe roots.

Thus, when  $N,M\to \infty$ with $q=\frac{n}{N}$ fixed, and $M>>N$, one expects the asymptotic of $Z_{{MN}}^{torus, n}(u,H)$ is determined by the contribution from the maximum eigenvalue. From this we have
\begin{align}
Z_{{MN}}^{torus, n}(u,H) = e^{NM \mathcal{H}_u(q, H ) }(1+o(1))
\end{align}
with $\mathscr{H}_u(q, H ) $ the semigrand free energy defined in (\ref{eq:semigrand}). 
One can argue that this asymptotic is uniform in $\frac{N}{M}$ and is given by the same 
formula when $N,M\to \infty$ and the ratio $\frac{N}{M}$ is finite. Combining this asymptotic with (\ref{ZNM})
we obtain the following asymptotic of the grant canonical partition function 
as $N,M\to \infty$
\begin{align}
Z_{{MN}}^{torus}(u,H,V) = e^{NMV} e^{NM f_u(H,V)}(1+o(1)).
\end{align}
Here the grand canonical free energy is the Legendre transform of the semigrand canonical free energy with the vertical magnetic field $V$ conjugate to $q$:
\begin{align}
f_u(H,V) = \max_q (\mathscr{H}_u(q, H ) - 2qV).
\end{align}

\section{Analysis of density integrals and integral equations}\label{ident}
In this section we will study integral equations that appear from Bethe equations in the thermodynamic limit and prove the properties that we use in the proof of the Poisson commutativity of Hamiltonians. In table \ref{tab:def}, below, we summarize multiple notations that we use.

\begin{table}[h!]
\begin{center}
\def\arraystretch{1.5}
    \begin{tabular}{||c|c|c||}
        \hline 
        Function & Definition & Location \\ [0.5ex]
        \hline \hline 
        \small{$\Theta(\alpha - \beta)$} &\small{ $ e^{i\Theta(\alpha - \beta)} = - \frac{\sinh(i(\alpha-\beta) + \eta)}{\sinh(i(\alpha-\beta) - \eta)}$ }&\small{ Def. (\ref{def:pTheta}) }\\ [1ex]
        \hline
       \small{ $K(\alpha - \beta)$ }&\small{ $K(\alpha - \beta) = \frac{\partial}{\partial \alpha} \Theta(\alpha-\beta)$ }&\small{ Eqn. (\ref{eq:inteqrho}) }\\[1ex]
        \hline
       \small{ $p(\alpha)$ }&\small{ $ e^{ip(\alpha)} = \frac{\sinh( \frac{\eta}{2} + i \alpha )}{\sinh( \frac{\eta}{2} - i \alpha ) } $ }&\small{ Def. (\ref{def:pTheta})} \\[1ex]
        \hline
        \small{$\psi_\pm(\alpha +i u)$ }&\small{ $ e^{\psi_\pm (\alpha + iu)} =  \frac{\sinh(u - i\alpha - \frac{\eta}{2} \pm \eta)}{\sinh( \frac{\eta}{2}-u-i\alpha )} $ }&\small{  Def. (\ref{def:psipm}) }\\[1ex]
        \hline
       \small{ $\H_u^\pm(q,H)$} & \small{ $\H_u^\pm(q,H)= \pm H + \ln\; \sinh(  \frac{\eta}{2} \pm (\frac{\eta}{2} - u)) + \int_C\psi_\pm(\alpha +i u) \rho(\alpha)d\alpha$ }&\small{  Eqn. (\ref{eq:HqH}) }\\[1ex]
        \hline
       \small{ $F(\alpha, \gamma)$ }&\small{ $\left(I + \frac{1}{2\pi} K\right)*F = \frac{1}{2\pi} \Theta$ }&\small{ Eqn. (\ref{eq:bigf}) }\\[1ex]
        \hline
        \small{$R(\alpha, \gamma)$ }&\small{$R(\alpha, \gamma) = \frac{\partial}{\partial \gamma} F(\alpha, \gamma)$ }&\small{ Eqn. (\ref{eq:bigr})}\\[1ex]
        \hline
        \small{$D_+(\alpha)$ }&\small{ $\left(I + \frac{1}{2\pi} K\right)*D_+(\alpha) = \frac{1}{2\pi} \left(\Theta(\alpha-B)+\Theta(\alpha-A)\right)$ }&\small{ Eqn. (\ref{eq:dpm})}\\[1ex]
        \hline
        \small{$D_-(\alpha)$ }&\small{ $\left(I + \frac{1}{2\pi} K\right)*D_-(\alpha) = \frac{1}{2\pi} $ }&\small{ Eqn. (\ref{eq:dpm}) }\\[1ex]
        \hline
        \small{$\xi$ }&\small{ $\xi = 2\pi i \left(f'(B) + \int_C f'(\alpha)R(\alpha,B)d\alpha \right) $ }&\small{ Eqn. (\ref{eq:xi}) }\\[1ex]
        \hline
        \small{$\w \xi$ }&\small{ $\w \xi = 2\pi i \left(f'(A) + \int_C f'(\alpha)R(\alpha,A)d\alpha \right) $ }&\small{ Eqn. (\ref{eq:xi}) }\\[1ex]
        \hline
    \end{tabular}
    \caption{The definitions of various important functions used in the paper. }\label{tab:def}
\end{center}
\end{table}

\subsection{Integrals kernels and integral equations}

The integral equation (\ref{eq:inteqrho}) for $ \rho $ can be written as 
\begin{align*}
\rho + \left( \frac{1}{2 \pi} K  * \rho\right) = \frac{1}{2 \pi} p'
\end{align*}
or
 \begin{align*}
 \left( I + \frac{1}{2 \pi} K \right) * \rho = \frac{1}{2 \pi} p'
 \end{align*}
 where $ * $ is convolution, $I$ is the convolution identity, and we recall $ K(\alpha) = \Theta'(\alpha) $.

Define $ F $ by
\begin{align} \label{eq:bigf}
F(\alpha, \gamma) + \frac{1}{2\pi} \int_{C} K(\alpha - \beta) F(\beta, \gamma) \; d \beta = \frac{1}{ 2 \pi } \Theta(\alpha - \gamma).
\end{align}
Let $ R(\alpha, \gamma) = \frac{\partial}{\partial \gamma} F(\alpha, \gamma)  $. It satisfies
\begin{align} \label{eq:bigr}
R(\alpha, \gamma) + \frac{1}{2\pi} \int_{C} K(\alpha - \beta) R(\beta, \gamma) d \beta = - \frac{1}{2 \pi} K(\alpha - \gamma).
\end{align}

In other words,
\begin{align*}
& R + \frac{1}{2 \pi } K * R = - \frac{1}{2 \pi } K  \\
\Rightarrow \; \; & (I + \frac{1}{ 2\pi} K) * R =  - \frac{1}{2 \pi } K \\
\Rightarrow \; \; & (I + \frac{1}{ 2\pi} K) * R + (I + \frac{1}{2 \pi } K) = I \\
\Rightarrow \; \; & (I + \frac{1}{ 2\pi} K) * (I+R) = I
\end{align*} or as operators
\[
I + R = \left( I + \frac{1}{2 \pi} K \right)^{-1}.
\]

We see that $ F $ and $ R $ satisfy
\begin{align}
\left( I + \frac{1}{2 \pi } K \right) * F &= \frac{1}{2 \pi} \Theta \label{eq:bigfresolv}\\
 \left( I + \frac{1}{2 \pi} K \right) * \left(I + R \right) &= I \label{eq:bigrresolv}
\end{align}

\begin{lem}
The functions satisfy the following symmetries:
\begin{enumerate}[a)]
\item $ K(\alpha^*) = K(\alpha) $
\item $ F(-\alpha^*, - \beta^* ) = - F(\alpha, \beta)^*  $
\item $ \rho(\alpha)^* = \rho(- \alpha^* ) $
\item $ R(\alpha, \beta)^* = R(-\alpha^*, -\beta^* ) $
\end{enumerate}
\end{lem}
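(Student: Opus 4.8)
The plan is to treat all four identities as manifestations of a single Schwarz-type symmetry, namely invariance under the involution $\alpha\mapsto-\alpha^*$ composed with complex conjugation, and to deduce them from \emph{uniqueness} of solutions of the linear integral equations that define $\rho$, $F$ and $R$. The essential geometric input is already recorded above: the Bethe roots concentrate on a contour $C$ with $B=-A^*$, so $C$ is invariant under $\beta\mapsto-\beta^*$, this reflection exchanging the endpoints $A$ and $B$. Part (a) is the base case and is immediate by differentiating the elementary relations for $\Theta$ in \eqref{eq:kernsym}: since $K=\Theta'$, oddness $\Theta(-\alpha)=-\Theta(\alpha)$ gives that $K$ is even, while $\Theta(\alpha^*)=\Theta(\alpha)^*$ (reality of the Taylor coefficients) is inherited by the derivative. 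The same differentiation applied to $p$ yields $p'(-\alpha)=p'(\alpha)$ together with $p'(\alpha)^*=p'(\alpha^*)$, hence $p'(\alpha)^*=p'(-\alpha^*)$; these are the only properties of the kernel and source term I will need downstream.

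For (c) I would start from $2\pi\rho(\alpha)=p'(\alpha)-\int_C K(\alpha-\beta)\rho(\beta)\,d\beta$, replace $\alpha$ by $-\alpha^*$, conjugate the whole identity, and change variables by $\beta=-\delta^*$. Under this substitution $C$ maps to itself with reversed orientation (because $A\leftrightarrow B$) and $\overline{d\beta}=-d\delta$, the two sign flips cancelling; using $\overline{K(-\alpha^*-\beta)}=K(\alpha-\delta)$ from (a) and evenness, together with $\overline{p'(-\alpha^*)}=p'(\alpha)$, one finds that $\tilde\rho(\alpha):=\rho(-\alpha^*)^*$ satisfies exactly the same integral equation as $\rho$. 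Since $I+\tfrac1{2\pi}K$ is invertible along $C$, the solution is unique and $\tilde\rho=\rho$, which is (c). Equivalently, (c) is the derivative of the relation $t(\alpha)=-t(-\alpha^*)$ established earlier.

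Parts (b) and (d) follow the identical template applied to \eqref{eq:bigf} and \eqref{eq:bigr}: substitute $(\alpha,\gamma)\mapsto(-\alpha^*,-\gamma^*)$, conjugate, and change $\beta=-\delta^*$. For $F$ the candidate $G(\alpha,\gamma):=-F(-\alpha^*,-\gamma^*)^*$ is seen to solve \eqref{eq:bigf}, the extra minus sign arising because the right-hand side $\Theta$ is \emph{odd}; uniqueness gives $G=F$, i.e.\ (b). For $R$ one instead sets $\tilde R(\alpha,\gamma):=R(-\alpha^*,-\gamma^*)^*$, and here no sign appears because the right-hand side of \eqref{eq:bigr} is $-\tfrac1{2\pi}K$ with $K$ \emph{even}; uniqueness then yields (d). Alternatively (d) can be read off from (b) by differentiating in $\gamma$, using $R=\partial_\gamma F$ and the analyticity of $F$ to commute the derivative past the reflection.

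The main obstacle is not any single algebraic identity but the careful bookkeeping of the contour under $\beta\mapsto-\beta^*$: one must verify that $C$ is genuinely invariant (not merely homotopic) under the reflection, track the interchange of endpoints and the resulting orientation reversal, and check that $\overline{d\beta}=-d\delta$ combines with it to leave each integral unchanged. Equally, the passage from ``same equation'' to ``same solution'' must be underpinned by invertibility of $I+\tfrac1{2\pi}K$ on $C$; this is where the spectral properties of the kernel enter and is the point most deserving of a precise statement. Finally, because complex conjugation is anti-holomorphic, all of these are really identities between $F$, $R$, $\rho$ and their Schwarz reflections, so one should invoke analyticity of the continued functions to make sense of evaluating them at the reflected arguments.
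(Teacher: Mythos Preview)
The paper states this lemma without proof, so there is nothing to compare against; your argument is essentially the intended (and standard) one. Deriving (a) from \eqref{eq:kernsym}, then pushing the involution $\alpha\mapsto-\alpha^*$ followed by conjugation through the defining integral equations \eqref{eq:inteqrho}, \eqref{eq:bigf}, \eqref{eq:bigr} and invoking uniqueness of $(I+\tfrac{1}{2\pi}K)^{-1}$ is exactly how one fills in these details. Your remark that (c) is just the $\alpha$-derivative of the relation $t(\alpha)=-t(-\alpha^*)$, which the paper has already established in the paragraph following \eqref{eq:star}, is the quickest route to that particular item.

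One small point worth flagging: as literally printed, item (a) reads $K(\alpha^*)=K(\alpha)$, whereas what you actually prove (and what is used downstream, e.g.\ in the manipulation $\overline{K(-\alpha^*-\beta)}=K(\alpha-\delta)$) is the pair $K(\alpha^*)=K(\alpha)^*$ and $K(-\alpha)=K(\alpha)$. These are the correct statements; the version without the conjugate on the right is not generally true for complex $\alpha$. Your proof is unaffected, since you use the right identities throughout, but you might note the discrepancy explicitly. The caveats you raise at the end---genuine invariance of $C$ under $\beta\mapsto-\beta^*$, orientation bookkeeping, and invertibility of $I+\tfrac{1}{2\pi}K$---are precisely the points a careful write-up should make explicit; the paper takes all of them for granted.
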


\begin{lem}\label{eq:pfpa}
\begin{align*}
\frac{\partial F(\alpha, \gamma)}{\partial \alpha}  = - R(\alpha, \gamma) -  R(\alpha, B) F(B, \gamma) + R (\alpha, A) F(A, \gamma)
\end{align*}
\end{lem}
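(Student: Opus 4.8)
The plan is to differentiate the defining integral equation (\ref{eq:bigf}) for $F(\alpha,\gamma)$ with respect to its first argument $\alpha$ and then recast the result as an integral equation for $\partial_\alpha F(\cdot,\gamma)$ carrying the \emph{same} kernel operator $I+\frac{1}{2\pi}K$, so that it can be solved by applying the resolvent from (\ref{eq:bigrresolv}). The variable $\alpha$ enters (\ref{eq:bigf}) in three places: the explicit term $F(\alpha,\gamma)$, the kernel $K(\alpha-\beta)$ under the integral, and the right-hand side $\frac{1}{2\pi}\Theta(\alpha-\gamma)$. Crucially, the contour $C$ and its endpoints $A,B$ are fixed by the density equation (\ref{eq:inteqrho}) and do not depend on $\alpha$, so differentiating the integral produces no endpoint contributions from the limits of integration themselves.

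Differentiating (\ref{eq:bigf}) in $\alpha$ and using $\Theta'=K$ on the right-hand side gives
\[
\partial_\alpha F(\alpha,\gamma) + \frac{1}{2\pi}\int_C K'(\alpha-\beta)\,F(\beta,\gamma)\,d\beta = \frac{1}{2\pi}K(\alpha-\gamma).
\]
The key step is to write $K'(\alpha-\beta) = -\partial_\beta K(\alpha-\beta)$ and integrate by parts along $C$. The boundary terms are evaluated at the endpoints $B$ and $A$, producing the contribution $-\frac{1}{2\pi}\big(K(\alpha-B)F(B,\gamma)-K(\alpha-A)F(A,\gamma)\big)$ on the left, while the surviving integral becomes $+\frac{1}{2\pi}\int_C K(\alpha-\beta)\,\partial_\beta F(\beta,\gamma)\,d\beta$. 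Here $\partial_\beta F(\beta,\gamma)$ denotes the derivative of $F$ in its \emph{first} slot, i.e. exactly the unknown $\partial_\alpha F$ evaluated at $\beta$, so the equation closes on itself with the kernel $I+\frac{1}{2\pi}K$.

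Collecting terms, $\partial_\alpha F(\cdot,\gamma)$ satisfies
\[
\left(I + \tfrac{1}{2\pi}K\right) * \partial_\alpha F(\cdot,\gamma) = \tfrac{1}{2\pi}K(\cdot-\gamma) + \tfrac{1}{2\pi}K(\cdot-B)F(B,\gamma) - \tfrac{1}{2\pi}K(\cdot-A)F(A,\gamma).
\]
I would then apply the resolvent $\left(I+\frac{1}{2\pi}K\right)^{-1} = I + R$ from (\ref{eq:bigrresolv}). Comparing the defining equation (\ref{eq:bigr}) for $R$, which reads $\left(I+\frac{1}{2\pi}K\right)*R(\cdot,w) = -\frac{1}{2\pi}K(\cdot-w)$, one reads off that applying $(I+R)$ to $\frac{1}{2\pi}K(\cdot-w)$ produces $-R(\cdot,w)$ for any fixed $w$. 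Setting $w=\gamma$, $w=B$, and $w=A$ in turn converts the three right-hand side terms into $-R(\alpha,\gamma)$, $-R(\alpha,B)F(B,\gamma)$, and $+R(\alpha,A)F(A,\gamma)$, which is precisely the claimed identity.

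The hard part will be the integration-by-parts step: one must correctly track the signs of the two boundary contributions at $A$ and $B$ and recognize that the remaining integral involves $\partial_\beta F$ in the first argument rather than $R=\partial_\gamma F$, so that the problem genuinely reduces to a single linear equation with the already-inverted kernel. Everything after that is formal operator algebra built on the resolvent identity (\ref{eq:bigrresolv}); the only analytic input required is that $F(\cdot,\gamma)$ is differentiable along $C$ with well-defined endpoint values $F(A,\gamma)$ and $F(B,\gamma)$, which is what legitimizes the boundary terms.
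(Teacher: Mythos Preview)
Your proof is correct and follows essentially the same route as the paper's: differentiate (\ref{eq:bigf}) in $\alpha$, convert $\partial_\alpha K(\alpha-\beta)$ to $-\partial_\beta K(\alpha-\beta)$, integrate by parts along $C$ to pick up the boundary terms at $A,B$, recognize the resulting equation as $(I+\tfrac{1}{2\pi}K)*\partial_\alpha F$ equal to a sum of three $\tfrac{1}{2\pi}K(\cdot-w)$ terms, and invert using $(I+R)*\tfrac{1}{2\pi}K(\cdot-w)=-R(\cdot,w)$ from (\ref{eq:rk}). Your sign bookkeeping and identification of $\partial_\beta F(\beta,\gamma)$ as the first-slot derivative are handled correctly.
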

\begin{proof}
Starting with the definition (\ref{eq:bigf}) of $ F $ and taking the $\alpha$ derivative, we have
\begin{align*}
\frac{1}{2 \pi} K(\alpha - \gamma) =&\frac{\partial F}{\partial \alpha} + \frac{1}{2 \pi} \int_C \frac{\partial}{\partial \alpha} K(\alpha - \beta) F(\beta, \gamma) d \beta  \\
=& \frac{\partial F}{\partial \alpha} - \frac{1}{2 \pi} \int_C \frac{\partial}{\partial \beta} K(\alpha - \beta) F(\beta, \gamma) d \beta \\
=& \frac{\partial F}{\partial \alpha} -\frac{1}{2 \pi} \left( K(\alpha - B) F(B, \gamma) -K(\alpha - A) F(A,\gamma) \right) + \frac{1}{2 \pi} \int_C K(\alpha - \beta) \frac{\partial F(\beta, \gamma)}{\partial \beta} d \beta.
\end{align*}
Equivalently
\begin{align*}
\frac{1}{2 \pi} \Big( K(\alpha - \gamma)  + K(\alpha - B) F(B, \gamma) -K(\alpha - A) F(A,\gamma) \Big) = \left( I + \frac{1}{2 \pi} K \right) * \frac{\partial F}{\partial \alpha} (\alpha, \gamma).
\end{align*}
Multiplying by $ (I + R) $ and using
\begin{align}\label{eq:rk}
\left( (I + R) * \frac{1}{2\pi } K \right)(\alpha, \beta) &= -R(\alpha, \beta)
\end{align}
we have the lemma.
\end{proof}

\subsection{Analysis of density integrals}

In the next section we will study integrals of the form
\begin{align*}
g(q, H) = \int_{C} f(\alpha) \; \rho(\alpha)  \;  d \alpha
\end{align*}
for arbitrary functions $ f $, where $\rho$ and $C$ are determined by Bethe integral equation (\ref{eq:inteqrho}), as well as conditions (\ref{eq:partialt}) and (\ref{eq:norm}). We will also call these density integrals. Integrals of this type describe largest eigenvalue in the thermodynamic limit (see equations (\ref{eq:semigrand}) and (\ref{eq:HqH})). This section is a survey of \cite{IKR} and \cite{NK}.

\begin{lem}
$ g(q, H ) $ is real if $ f(\alpha)^* = f(-\alpha^*) $.
\end{lem}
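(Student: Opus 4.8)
The plan is to compute the complex conjugate $g(q,H)^*$ directly and show it equals $g(q,H)$, exploiting the reflection symmetry $\alpha \mapsto -\alpha^*$ across the imaginary axis that governs both the contour $C$ and the density $\rho$. Recall that the analysis following (\ref{eq:star}) established $t(\alpha) = -t(-\alpha^*)$, so the contour $C$ is invariant under $\alpha \mapsto -\alpha^*$ (with endpoints exchanged, $B = -A^*$), while part (c) of the preceding symmetry lemma gives $\rho(\alpha)^* = \rho(-\alpha^*)$. These two facts, together with the hypothesis $f(\alpha)^* = f(-\alpha^*)$, say that all three factors of the integrand transform compatibly under the same involution.

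First I would parametrize $C$ by $\alpha = \alpha(t)$, $t \in [-q/2, q/2]$, so that $g = \int_{-q/2}^{q/2} f(\alpha(t))\,\rho(\alpha(t))\,\alpha'(t)\,dt$. The relation $t(\alpha)=-t(-\alpha^*)$ translates into the pointwise statement $\alpha(-t) = -\alpha(t)^*$; differentiating this identity in the real variable $t$ (using $\tfrac{d}{dt}\overline{\alpha(t)} = \overline{\alpha'(t)}$) yields $\alpha'(t)^* = \alpha'(-t)$. This is the one computation that makes the argument go through, and it is short.

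Taking the conjugate of $g$ and applying $f(\alpha)^* = f(-\alpha^*)$ together with $\rho(\alpha)^* = \rho(-\alpha^*)$, each factor $f(\alpha(t))^*$ and $\rho(\alpha(t))^*$ becomes its value at $-\alpha(t)^* = \alpha(-t)$, and $\alpha'(t)^*$ becomes $\alpha'(-t)$. Thus $g^* = \int_{-q/2}^{q/2} f(\alpha(-t))\,\rho(\alpha(-t))\,\alpha'(-t)\,dt$, and the substitution $t \mapsto -t$ (which reverses and then restores the limits of integration) maps this integral back onto the original expression for $g$, giving $g^* = g$ and hence $g \in \R$.

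I expect the only genuine subtlety to be the bookkeeping of the contour and its orientation under conjugation, namely justifying that conjugating the line integral along $C$ returns an integral along $C$ traversed in reverse rather than along some unrelated contour. This is precisely what the invariance of $C$ under $\alpha \mapsto -\alpha^*$ guarantees, and the $t \mapsto -t$ substitution then restores the correct orientation. The analyticity of the integrand off $C$ (so that the reflected contour may be identified with $C$ up to an allowable deformation that crosses no poles of the relevant kernels) removes any residual ambiguity in this identification.
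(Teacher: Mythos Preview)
Your proposal is correct and follows essentially the same approach as the paper: conjugate the integral, invoke the symmetries $f(\alpha)^*=f(-\alpha^*)$, $\rho(\alpha)^*=\rho(-\alpha^*)$, and the invariance $-C^*=C$, then identify the result with $g$. The paper's proof is a three-line contour computation using $-C^*=C$ directly, whereas you unpack the same argument via the real parametrization $\alpha(t)$ and the identity $\alpha(-t)=-\alpha(t)^*$; the substitution $t\mapsto -t$ in your version is exactly the orientation-reversal hidden in the paper's passage from $C^*$ to $-C^*$ to $C$.
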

\begin{proof}
Since $ -C^* =  C $
\begin{align*}
g^* &= \int_{C^*} f(\alpha)^* \; \rho(\alpha)^* \; d \alpha^* \\
&= \int_{-C^*}f(\gamma) \; \rho(\gamma) \; d \gamma \\
&= \int_{C} f(\gamma) \; \rho(\gamma) \; d \gamma = g
\end{align*}
\end{proof}
\noindent
We will assume $f(\alpha)^*=f(-\alpha^*)$ since the functions $\psi_\pm(\alpha)$ satisfy this assumption.

\subsection{First derivatives of $ g (q, H ) $ } Here we will compute first derivatives
of $g$ in $q$ and $H$. Recall that the endpoints of the contour also depend on $q$ and $H$. Let $dg (q,H) = \frac{\partial g }{\partial q} dq + \frac{\partial g}{\partial H} dH$ be the total derivative with respect to $q$ and $H$. 

For the total derivative of $g(q, H)$ we have
\begin{equation}
\begin{aligned}
d g(q, H) &= d \int_{A}^{B} f(\alpha) \rho(\alpha) d \alpha \\
&= d B \; f(B) \; \rho(B) - d A \; f(A)  \; \rho(A) + \int_{A}^B f(\alpha) \; d \rho(\alpha)  \; d \alpha
\end{aligned}
\end{equation}
Recall from equation (\ref{eq:partialt}) that $\rho(\alpha)$ is the partial derivative of $t(\alpha)$ along the contour $C$. We can use this in the above and integrate by parts to obtain
\begin{equation} \label{eq:denInt}
    \begin{aligned}
    d g(q, H) &=  d B \; f(B) \; \rho(B) - d A \; f(A)  \; \rho(A) + \int_{A}^B f(\alpha) \; \frac{\partial dt(\alpha)}{\partial \alpha}  \; d \alpha \\
    &=  d B \; f(B) \; \rho(B) - d A \; f(A)  \; \rho(A) + f(B) dt(B) - f(A) dt(A) \\
    & \hspace{1cm} - \int_A^B \frac{\partial f(\alpha)}{\partial \alpha} \; dt(\alpha) \; d\alpha 
    \end{aligned}
\end{equation}

We know that at the endpoints we have
\[
\begin{aligned}
t(B) = \frac{1}{2}q \\
t(A) = -\frac{1}{2}q
\end{aligned}
\] 
(see equation (\ref{eq:star})). Varying these equations in $ A$ and $B$ we obtain
\begin{equation}\label{deltatAB}
\begin{aligned}
 dt(B) + \rho(B) \;  dB &= \frac{1}{2}  dq  \\
 dt(A) + \rho(A) \; dA &= -\frac{1}{2}  dq
\end{aligned}
\end{equation}

\begin{lem}The variation of $t(\alpha)$ satisfies the following integral equation
\begin{align*}
\big(I + \frac{1}{2\pi} K \big) * dt (\alpha) = \frac{i}{\pi} dH - \frac{1}{4 \pi} \big( \Theta(\alpha-B) + \Theta(\alpha-A) \big) \;  dq
\end{align*}
\end{lem}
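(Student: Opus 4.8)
The plan is to differentiate the defining integral equation (\ref{eq:star}) for $t(\alpha)$ with respect to the parameters $q$ and $H$, treating the endpoints $A,B$ of the contour as functions of these parameters. Starting from
\[
2\pi t(\alpha) = p(\alpha) + 2iH - \int_A^B \Theta(\alpha-\beta)\, t'(\beta)\, d\beta,
\]
I would take the total derivative $d = \frac{\partial}{\partial q}\,dq + \frac{\partial}{\partial H}\,dH$. The $p(\alpha)$ term is independent of $q,H$ and drops out, while the $2iH$ term contributes $2i\,dH$. The integral term is the source of the main work: its variation has two pieces, namely the variation of the endpoints of integration and the variation of the integrand $t'(\beta)$ itself.

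\emph{Key steps in order.} First I would differentiate the boundary terms of the integral: varying the upper limit $B$ produces $-\Theta(\alpha-B)\,t'(B)\,dB$ and varying the lower limit $A$ produces $+\Theta(\alpha-A)\,t'(A)\,dA$. Using $t'=\rho$ from (\ref{eq:partialt}) these become $-\Theta(\alpha-B)\rho(B)\,dB + \Theta(\alpha-A)\rho(A)\,dA$. Second, for the interior variation I would write $\int_A^B \Theta(\alpha-\beta)\, d t'(\beta)\, d\beta$ and integrate by parts so that the derivative falls off $t$; the boundary terms from this integration by parts are $\Theta(\alpha-B)\,dt(B) - \Theta(\alpha-A)\,dt(A)$, and the remaining interior integral is $-\int_A^B K(\alpha-\beta)\,dt(\beta)\,d\beta$ since $K=\Theta'$. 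Third, I would invoke the endpoint relations (\ref{deltatAB}), namely $dt(B)+\rho(B)\,dB = \tfrac12\,dq$ and $dt(A)+\rho(A)\,dA = -\tfrac12\,dq$, to combine the two families of boundary terms: the $dB$ and $dA$ contributions merge with the $dt(B),dt(A)$ contributions into $\tfrac12 dq$ and $-\tfrac12 dq$ respectively, which is exactly where the $-\tfrac14(\Theta(\alpha-B)+\Theta(\alpha-A))\,dq$ on the right-hand side will originate.

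\emph{Assembling.} After these substitutions the varied equation reads $2\pi\,dt(\alpha) = 2i\,dH - \tfrac12\Theta(\alpha-B)\,dq - \tfrac12\Theta(\alpha-A)\,dq - \int_C K(\alpha-\beta)\,dt(\beta)\,d\beta$. Moving the convolution term to the left and dividing by $2\pi$ yields $\big(I + \tfrac1{2\pi}K\big)*dt(\alpha) = \tfrac{i}{\pi}\,dH - \tfrac1{4\pi}\big(\Theta(\alpha-B)+\Theta(\alpha-A)\big)\,dq$, which is the claim.

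\emph{Main obstacle.} The delicate point is the careful bookkeeping of the two distinct sources of boundary terms — the variation of the integration limits $A,B$ in the original equation versus the boundary terms generated by the integration by parts of the interior variation — and verifying that they combine, via (\ref{deltatAB}), precisely into the stated $dq$-coefficient without leftover $dA,dB$ terms. I expect the cancellation of the $\rho(B)\,dB$ and $\rho(A)\,dA$ pieces against the endpoint relations to be where sign errors are most likely, so I would track each term explicitly through the substitution.
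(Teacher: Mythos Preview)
Your proposal is correct and follows essentially the same approach as the paper: differentiate equation (\ref{eq:star}), separate the endpoint variations from the interior variation, integrate the interior piece by parts to produce the $K*dt$ convolution and additional boundary terms, and then use (\ref{deltatAB}) to collapse the combined boundary contributions into the $dq$ coefficient. The paper packages the endpoint-plus-integration-by-parts step by citing the earlier computation (\ref{eq:denInt}) with $f=\Theta(\alpha-\cdot)$, but the content is identical to what you spell out.
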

\begin{proof}
Starting from equation (\ref{eq:star}) and using the previous calculation (\ref{eq:denInt}) with $f(\alpha) = \Theta(\alpha - \beta)$, we have
\begin{align*}
2 \pi \; dt( \alpha) =& 2 i \;  dH  -  dB \; \Theta(\alpha - B) \; \rho(B) +  dA \; \Theta(\alpha - A) \rho(A) \\ & - \Theta(\alpha - B) \; dt (B) + \Theta(\alpha - A) \; dt(A) - \int_C K(\alpha - \beta) \; dt(\beta) \; d \beta.
\end{align*}
Then substituting equations (\ref{deltatAB}) for $\rho(A)dA$ and $\rho(B)dB$ gives 
\begin{align*}
dt (\alpha) + \frac{1}{2\pi} K * dt (\alpha) =  \frac{i}{\pi}dH -  \frac{1 }{4 \pi} \Theta( \alpha - B) \; dq - \frac{1 }{4 \pi} \Theta(\alpha - A) \; dq
\end{align*} which can be rearranged to give the desired result.
\end{proof}

Define functions $ D_{\pm}(\alpha) $ as solutions to integral equations
\begin{equation}\label{eq:dpm}
\begin{aligned}
\left( I + \frac{1}{2\pi} K\right) * D_-(\alpha)& = \frac{1}{2\pi} \\
\left( I + \frac{1}{2\pi} K \right) * D_+(\alpha)& = \frac{1}{2 \pi} \Big( \Theta(\alpha - B) + \Theta(\alpha - A) \Big).
\end{aligned}
\end{equation}

\begin{lem} \label{lem:dressedcharge}The functions $ D_{\pm} $ satisfy
\begin{enumerate}[1)]
\item $ D_-(\alpha) = \frac{1}{2\pi} \left( 1 +  F(\alpha, B) - F(\alpha, A) \right) $
\item $ D_+(\alpha) = F(\alpha, B) + F(\alpha, A) $.
\end{enumerate}
\end{lem}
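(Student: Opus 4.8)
The plan is to prove both identities by the uniqueness principle already established for these integral equations: since the excerpt shows $\left(I + \frac{1}{2\pi}K\right)*(I+R) = I$, the operator $I + \frac{1}{2\pi}K$ is invertible with inverse $I + R$, so each of the defining equations (\ref{eq:dpm}) for $D_\pm$ has a unique solution. It therefore suffices to substitute the proposed closed forms into the left-hand sides of (\ref{eq:dpm}) and check that we recover the prescribed right-hand sides.

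The identity for $D_+$ is essentially immediate. Applying $I + \frac{1}{2\pi}K$ to $F(\alpha,B) + F(\alpha,A)$ and using the defining equation (\ref{eq:bigf}) for $F$ with $\gamma = B$ and $\gamma = A$ separately, linearity gives $\frac{1}{2\pi}\big(\Theta(\alpha-B) + \Theta(\alpha-A)\big)$, which is exactly the right-hand side of the second line of (\ref{eq:dpm}). By uniqueness, $D_+(\alpha) = F(\alpha,B) + F(\alpha,A)$.

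The identity for $D_-$ is the one requiring care. I would apply $I + \frac{1}{2\pi}K$ to the claimed expression $\frac{1}{2\pi}\big(1 + F(\alpha,B) - F(\alpha,A)\big)$ and treat the constant piece and the $F$-difference piece separately. For the $F$-difference, (\ref{eq:bigf}) again yields $\frac{1}{(2\pi)^2}\big(\Theta(\alpha-B) - \Theta(\alpha-A)\big)$. For the constant piece $\frac{1}{2\pi}$, the identity operator returns $\frac{1}{2\pi}$, while the convolution term contributes $\frac{1}{(2\pi)^2}\int_C K(\alpha-\beta)\,d\beta$. Here the crucial point is that $C$ is a finite contour from $A$ to $B$, so this integral is not zero: using $K = \Theta'$, hence $K(\alpha-\beta) = -\tfrac{\partial}{\partial\beta}\Theta(\alpha-\beta)$, the fundamental theorem of calculus gives $\int_C K(\alpha-\beta)\,d\beta = \Theta(\alpha-A) - \Theta(\alpha-B)$. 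Adding the two pieces, the terms $\frac{1}{(2\pi)^2}\big(\Theta(\alpha-A)-\Theta(\alpha-B)\big)$ and $\frac{1}{(2\pi)^2}\big(\Theta(\alpha-B)-\Theta(\alpha-A)\big)$ cancel exactly, leaving $\frac{1}{2\pi}$, which matches the first line of (\ref{eq:dpm}). Uniqueness then yields the first identity.

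I expect the main obstacle to be precisely this boundary-term bookkeeping in the $D_-$ computation: one must resist the temptation to discard $\int_C K(\alpha-\beta)\,d\beta$ and instead recognize that it produces endpoint contributions $\Theta(\alpha-A)-\Theta(\alpha-B)$, and that these are exactly what is needed to cancel the $F$-difference. This cancellation is what pins down the specific combination $1 + F(\alpha,B) - F(\alpha,A)$ together with the normalization $\frac{1}{2\pi}$; the opposite sign or a different constant would fail to solve the equation. All remaining steps are routine applications of linearity and the definition of $F$.
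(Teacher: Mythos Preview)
Your proof is correct. The paper's argument is the mirror image of yours: rather than substituting the claimed expressions and applying $I+\tfrac{1}{2\pi}K$ to verify the defining equations, the paper applies the inverse $I+R$ directly to the right-hand sides of (\ref{eq:dpm}), using $R(\alpha,\beta)=\partial_\beta F(\alpha,\beta)$ and the fundamental theorem of calculus on $\int_A^B R(\alpha,\beta)\,d\beta$ for $D_-$, and the relation $(I+R)*\tfrac{1}{2\pi}\Theta(\cdot-\gamma)=F(\cdot,\gamma)$ for $D_+$. Both arguments hinge on the same two ingredients (invertibility of $I+\tfrac{1}{2\pi}K$ and an endpoint evaluation on the finite contour $C$), so the difference is purely one of direction; your verification route has the mild advantage of making the role of the boundary terms and the specific constant $1$ in the $D_-$ formula completely explicit.
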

\begin{proof} ~ \\
\begin{enumerate}[1)]
\item By definition of $R$ we have
\begin{equation*}
\begin{aligned}
\left( I + \frac{1}{2\pi} K \right) * (I + R) = I,
\end{aligned} \; \; \; \; \begin{aligned} ( I + R) * \left( I + \frac{1}{2\pi} K \right) = I. \end{aligned}
\end{equation*}
Inverting $I+\frac{1}{2\pi}K$ in the definition of $D_-$ we obtain
\begin{align*}
D_-(\alpha) =& \big( (I + R) * \frac{1}{2 \pi} \big)(\alpha) \\ =& \frac{1}{2 \pi} + \frac{1}{2\pi}\int_A^B R(\alpha, \beta) \; d \beta \\
=& \frac{1}{2\pi} + \frac{1}{2 \pi}\int_A^B \frac{\partial F(\alpha, \beta) }{\partial \beta} \; d \beta
\\ =& \frac{1}{2\pi}\left(1+ F(\alpha, B) - F(\alpha, A)\right).
\end{align*}
\item Similarly for $D_+$ we obtain
\begin{align*}
D_+(\alpha) &= (I + R) * \frac{1}{2\pi} \left(\Theta(\alpha - B) + \Theta(\alpha-A)\right) \\
&= (I + R) * \left( \big(I + \frac{1}{2 \pi} K \big) * F(\alpha,B) + \big(I + \frac{1}{2\pi} K\big) * F(\alpha,A)\right) \\
&= F(\alpha, B) + F(\alpha, A).
\end{align*}
\end{enumerate}
\end{proof}

It is easy to see now that the variation of $t(\alpha)$ can be written as
\begin{lem} \label{lem:deltat}
\begin{equation}\label{deltat}
 dt (\alpha) = 2 i \; D_-(\alpha) \;  dH - \frac{1}{2}\; D_+(\alpha) \;  dq.
\end{equation}
\end{lem}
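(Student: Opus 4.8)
The plan is to obtain the formula directly by matching the integral equation for $dt(\alpha)$ established in the preceding lemma against the defining equations (\ref{eq:dpm}) for $D_\pm$, and then inverting the operator $I + \frac{1}{2\pi}K$. The only nontrivial input is the invertibility of that operator, which has already been secured through the construction of the resolvent $R$ in (\ref{eq:bigrresolv}); everything else is bookkeeping of constants.

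First I would rewrite the right-hand side of the equation
\[
\left(I + \frac{1}{2\pi}K\right) * dt(\alpha) = \frac{i}{\pi}\,dH - \frac{1}{4\pi}\big(\Theta(\alpha-B) + \Theta(\alpha-A)\big)\,dq
\]
by factoring out the coefficients $2i\,dH$ and $-\tfrac12\,dq$. Using $\frac{i}{\pi} = 2i\cdot\frac{1}{2\pi}$ and $-\frac{1}{4\pi} = -\frac12\cdot\frac{1}{2\pi}$, the two pieces of the right-hand side become $2i\,dH\cdot\frac{1}{2\pi}$ and $-\frac12\,dq\cdot\frac{1}{2\pi}\big(\Theta(\alpha-B)+\Theta(\alpha-A)\big)$. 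By the definitions (\ref{eq:dpm}), these are exactly $2i\,dH\cdot\big((I+\frac{1}{2\pi}K)*D_-\big)$ and $-\frac12\,dq\cdot\big((I+\frac{1}{2\pi}K)*D_+\big)$ respectively.

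Next, invoking linearity of the convolution operator $I + \frac{1}{2\pi}K$, I would collect these into a single expression, so that
\[
\left(I + \frac{1}{2\pi}K\right) * \left(dt(\alpha) - 2i\,dH\,D_-(\alpha) + \tfrac12\,dq\,D_+(\alpha)\right) = 0.
\]
Finally I would apply the inverse operator $I + R$, which satisfies $(I + \frac{1}{2\pi}K)*(I+R) = I$ by (\ref{eq:bigrresolv}), to annihilate the bracket and conclude
\[
dt(\alpha) = 2i\,D_-(\alpha)\,dH - \tfrac12\,D_+(\alpha)\,dq,
\]
which is the claim. There is essentially no obstacle in this argument; the one point worth stating explicitly is that the kernel of $I + \frac{1}{2\pi}K$ is trivial (equivalently, that the homogeneous equation has only the zero solution), which is precisely what the existence of the resolvent $R$ guarantees. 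If desired, one could substitute the explicit formulas for $D_\pm$ from Lemma \ref{lem:dressedcharge} to re-express $dt(\alpha)$ in terms of $F(\alpha,A)$ and $F(\alpha,B)$, but that refinement is not required for the statement as given.
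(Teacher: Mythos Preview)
Your proposal is correct and follows exactly the argument the paper has in mind: the paper does not write out a proof at all, simply prefacing the lemma with ``It is easy to see now that the variation of $t(\alpha)$ can be written as,'' having set up the preceding integral equation for $dt(\alpha)$ and the definitions (\ref{eq:dpm}) of $D_\pm$ precisely so that this identification is immediate. Your explicit appeal to the resolvent (\ref{eq:bigrresolv}) to justify injectivity of $I+\tfrac{1}{2\pi}K$ is the only point the paper leaves implicit, and you handle it correctly.
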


Note the following properties of the functions $D_\pm$:
\begin{enumerate}[a)]
\item When we vary $ q $ and $ H$, $ t(\alpha) $ remains real which implies that $  dt(\alpha) $ is real. It follows then that
\begin{align*}
\text{Re}\;  D_-(\alpha) = 0 \\
\text{Im} \; D_+(\alpha) = 0.
\end{align*}
\item The functions $D_\pm(\alpha)$ have the following symmetries with respect to the complex conjugation
\begin{align*}
D_-(\alpha)^* &= 1 + F(-\alpha^*, -\beta^*) + F(-\alpha^*, -A^*)  = D_-(-\alpha^*)  \\
D_+(\alpha)^* &= -F(-\alpha^*,A) - F(-\alpha^*, B) = -D_+(-\alpha^*).
\end{align*}
In particular,
\begin{equation}
\begin{aligned}
D_-(B) &= D_-(A) \\
D_+(B) &= -D_+(A).
\end{aligned}
\end{equation}
\end{enumerate}

\begin{lem} \label{lem:DalphaDer}
\begin{align*}
\frac{\partial D_+(\alpha)}{\partial \alpha} &= - R(\alpha, B)( 1 +  D_+(B)) -  R(\alpha, A ) (1-D_+(A) ) \\
\frac{\partial D_-(\alpha)}{\partial \alpha} &=  R(\alpha, A) D_-(A) - R(\alpha, B) D_-(B)
\end{align*}
\end{lem}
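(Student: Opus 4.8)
The plan is to express both $D_\pm$ through the function $F$ and then reduce everything to the $\alpha$-derivative formula of Lemma \ref{eq:pfpa}. By Lemma \ref{lem:dressedcharge} we have $2\pi D_-(\alpha) = 1 + F(\alpha,B) - F(\alpha,A)$ and $D_+(\alpha) = F(\alpha,B) + F(\alpha,A)$. Since $A$ and $B$ are fixed endpoints, differentiating in $\alpha$ reduces the claim to computing $\frac{\partial}{\partial\alpha}F(\alpha,B)$ and $\frac{\partial}{\partial\alpha}F(\alpha,A)$, each of which is given explicitly by Lemma \ref{eq:pfpa}. No further input is needed.

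First I would apply Lemma \ref{eq:pfpa} with $\gamma = B$ and then with $\gamma = A$, obtaining
\begin{align*}
\frac{\partial F(\alpha,B)}{\partial\alpha} &= -R(\alpha,B) - R(\alpha,B)F(B,B) + R(\alpha,A)F(A,B),\\
\frac{\partial F(\alpha,A)}{\partial\alpha} &= -R(\alpha,A) - R(\alpha,B)F(B,A) + R(\alpha,A)F(A,A).
\end{align*}
For $D_+$ I would add these. Collecting, the coefficient of $-R(\alpha,B)$ becomes $1 + \bigl(F(B,B)+F(B,A)\bigr)$ and the coefficient of $R(\alpha,A)$ becomes $\bigl(F(A,B)+F(A,A)\bigr)-1$; recognizing from Lemma \ref{lem:dressedcharge} that $F(B,B)+F(B,A)=D_+(B)$ and $F(A,B)+F(A,A)=D_+(A)$ yields exactly $\frac{\partial D_+(\alpha)}{\partial\alpha} = -R(\alpha,B)(1+D_+(B)) - R(\alpha,A)(1-D_+(A))$.

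For $D_-$ I would instead subtract the two displayed expressions and divide by $2\pi$. Here the clean cancellation comes from the additive constant $1$ in $2\pi D_-(\alpha)$: the endpoint combinations that appear, $F(B,B)-F(B,A)$ and $F(A,B)-F(A,A)$, equal $2\pi D_-(B)-1$ and $2\pi D_-(A)-1$ respectively, so the $-1$'s precisely cancel the bare $-R(\alpha,B)+R(\alpha,A)$ terms, leaving $\frac{\partial D_-(\alpha)}{\partial\alpha} = R(\alpha,A)D_-(A) - R(\alpha,B)D_-(B)$. The only genuine subtlety — and the step I would check most carefully — is this bookkeeping of signs and of the additive constant, since the endpoint values $D_\pm(A),D_\pm(B)$ enter self-referentially on the right-hand side; everything else is a direct substitution into Lemma \ref{eq:pfpa} and a reassembly via Lemma \ref{lem:dressedcharge}.
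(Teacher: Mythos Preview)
Your proposal is correct and follows exactly the approach the paper indicates: the paper's proof simply says ``This follows from a simple computation using Lemmas \ref{lem:dressedcharge} and \ref{eq:pfpa},'' and you have carried out precisely that computation, with the sign and constant bookkeeping handled correctly.
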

\begin{proof}
This follows from a simple computation using Lemmas \ref{lem:dressedcharge} and \ref{eq:pfpa}.
\end{proof}

Now, using the variation of $t(\alpha)$, let us complete the first variation of $g(q, H)$.

\begin{prop}\label{prop:firstderivatives} Partial derivatives of $g(q, H)$ can be written as
\begin{align*}
\frac{\partial g}{\partial H } &= - 2 i \int_C f'(\alpha) D_-(\alpha) \; d \alpha \\
\frac{\partial g}{\partial q} &= \frac{1}{2} \left( f(B) + f(A) \right) + \frac{1}{2} \int_C f'(\alpha) D_+(\alpha) \; d \alpha.
\end{align*}
\end{prop}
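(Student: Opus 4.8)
The plan is to feed the variation formula for $dt(\alpha)$ from Lemma~\ref{lem:deltat} into the total-derivative expression (\ref{eq:denInt}) and then simply read off the coefficients of $dq$ and $dH$. No new integral equation needs to be solved; everything has been prepared in the preceding lemmas.

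First I would simplify the boundary terms in (\ref{eq:denInt}). That equation writes $dg$ as four endpoint contributions $dB\,f(B)\rho(B)$, $-dA\,f(A)\rho(A)$, $f(B)\,dt(B)$, $-f(A)\,dt(A)$ together with the bulk integral $-\int_A^B f'(\alpha)\,dt(\alpha)\,d\alpha$. The key observation is that the two contributions at $B$ group as $f(B)\bigl(\rho(B)\,dB + dt(B)\bigr)$, which by the first relation of (\ref{deltatAB}) collapses to $\tfrac12 f(B)\,dq$; likewise the two contributions at $A$ group as $-f(A)\bigl(\rho(A)\,dA + dt(A)\bigr)$, which by the second relation of (\ref{deltatAB}) equals $+\tfrac12 f(A)\,dq$. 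Hence all dependence on the endpoint variations $dA$ and $dB$ cancels, and $dg$ reduces to $\tfrac12\bigl(f(B)+f(A)\bigr)\,dq - \int_A^B f'(\alpha)\,dt(\alpha)\,d\alpha$.

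Next I would substitute $dt(\alpha) = 2i\,D_-(\alpha)\,dH - \tfrac12 D_+(\alpha)\,dq$ from Lemma~\ref{lem:deltat} into the remaining integral over $C$, splitting it into a $dH$-part $2i\,dH\int_C f'(\alpha)D_-(\alpha)\,d\alpha$ and a $dq$-part $-\tfrac12\,dq\int_C f'(\alpha)D_+(\alpha)\,d\alpha$. Collecting the coefficient of $dH$ yields $\partial g/\partial H = -2i\int_C f'(\alpha)D_-(\alpha)\,d\alpha$, and collecting the coefficient of $dq$ yields $\partial g/\partial q = \tfrac12\bigl(f(B)+f(A)\bigr) + \tfrac12\int_C f'(\alpha)D_+(\alpha)\,d\alpha$, which are precisely the claimed formulas.

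Since the genuine analytic content — deriving the integral equation for $dt$ and inverting it to express $dt$ through $D_\pm$ — is already established, the proposition is essentially bookkeeping and I expect no serious obstacle. The one point that demands care is the grouping of the endpoint terms: one must recognize that the apparently independent $\rho(B)\,dB$ and $dt(B)$ contributions are exactly the combination pinned down by (\ref{deltatAB}), so that the variations $dA$, $dB$ of the (unknown, $q$- and $H$-dependent) contour endpoints drop out and never appear in the final answer.
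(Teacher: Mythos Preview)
Your proposal is correct and follows exactly the same approach as the paper: start from (\ref{eq:denInt}), group the boundary terms using (\ref{deltatAB}) so that the $dA$, $dB$ dependence collapses to $\tfrac12(f(B)+f(A))\,dq$, then substitute Lemma~\ref{lem:deltat} for $dt(\alpha)$ and read off the coefficients of $dq$ and $dH$. The paper's proof is essentially the same four-line computation you outline.
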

\begin{proof} Using equation (\ref{deltatAB}) and Lemma \ref{lem:deltat}, we have
\begin{align*}
dg(q,H) &= dB \; f(B) \; \rho(B) - dA \; f(A) \; \rho(A) + \int_C f(\alpha) \; \partial_\alpha\big( dt(\alpha) \big) \; d \alpha \\
&= \big( dB \; \rho(B) + dt (B) \big) f(B) - \big(dA \; \rho(A) + dt(A)\big) f(A) - \int_C f'(\alpha) \; dt(\alpha) \; d \alpha \\
&= \frac{1}{2} ( f(B) + f(A) ) \;  dq - \int_C f'(\alpha) \Big( 2 i \; D_-(\alpha) \; dH - \frac{1}{2}\; D_+(\alpha) \;  dq\Big) \; d\alpha \\
&= \frac{dq}{2} \Big( \left( f(B) + f(A) \right) + \int_C f'(\alpha) D_+(\alpha) d \alpha \Big) - 2 i\;   dH \int_C f'(\alpha) D_-(\alpha)  \; d \alpha
\end{align*}
from which the proposition follows.
\end{proof}

\subsection{Second derivatives of $ g(q, H ) $ }
Next we turn to calculating the second derivatives of $g(q,H)$. To begin we prove several lemmas that will be useful for later computations.
\begin{lem}\label{eq:delf}
\begin{align*}
dF(\alpha, \gamma) = R(\alpha, B)\; F(B, \gamma) \; dB - R(\alpha, A) \; F(A,\gamma) \; dA
\end{align*}
\end{lem}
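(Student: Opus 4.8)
The plan is to differentiate the defining integral equation (\ref{eq:bigf}) for $F$ directly, exploiting the fact that the only data in that equation which depend on $q$ and $H$ are the endpoints $A$ and $B$ of the contour $C$: the kernel $K$ and the source $\Theta$ are fixed meromorphic functions, and $\alpha,\gamma$ are held fixed throughout the variation. Since the integrand in (\ref{eq:bigf}) is meromorphic in $\beta$ and the contour may be freely deformed between its endpoints without crossing poles of $K$ (as noted after (\ref{eq:inteqrho})), the integral $\int_C$ depends on $A,B$ only through those endpoints. Hence taking the total derivative $d$ produces a boundary contribution from the moving endpoints together with an interior contribution from $dF(\beta,\gamma)$, exactly as in the computation (\ref{eq:denInt}).

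First I would apply $d$ to (\ref{eq:bigf}). Because the right-hand side $\frac{1}{2\pi}\Theta(\alpha-\gamma)$ is independent of $A,B$, the fundamental theorem of calculus at the two limits gives
\[
dF(\alpha,\gamma) + \frac{1}{2\pi}\Big(K(\alpha-B)F(B,\gamma)\,dB - K(\alpha-A)F(A,\gamma)\,dA\Big) + \frac{1}{2\pi}\int_C K(\alpha-\beta)\,dF(\beta,\gamma)\,d\beta = 0.
\]
Collecting the terms that are convolutions against $dF$, this is
\[
\Big(I+\tfrac{1}{2\pi}K\Big)*dF(\cdot,\gamma)\,(\alpha) = -\frac{1}{2\pi}\Big(K(\alpha-B)F(B,\gamma)\,dB - K(\alpha-A)F(A,\gamma)\,dA\Big).
\]

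Next I would invert $I+\frac{1}{2\pi}K$ by convolving on the left with the resolvent $I+R=(I+\frac{1}{2\pi}K)^{-1}$ constructed before (\ref{eq:bigfresolv}); the left-hand side then collapses to $dF(\alpha,\gamma)$. For the right-hand side I use identity (\ref{eq:rk}) in the form $\big((I+R)*\frac{1}{2\pi}K\big)(\alpha,\beta)=-R(\alpha,\beta)$, which says precisely that applying $(I+R)$ to the function $\frac{1}{2\pi}K(\cdot-B)$ returns $-R(\alpha,B)$, and likewise with $A$. Since $F(B,\gamma)\,dB$ and $F(A,\gamma)\,dA$ are constant with respect to the convolution variable, they pull through, and I obtain
\[
dF(\alpha,\gamma) = R(\alpha,B)\,F(B,\gamma)\,dB - R(\alpha,A)\,F(A,\gamma)\,dA,
\]
which is the assertion.

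The one point requiring care — and the only, rather minor, obstacle — is the bookkeeping of the endpoint variation under $d$: one must confirm that $F$ carries no explicit $(q,H)$-dependence beyond the contour endpoints, so that the variation of the integral is governed solely by the boundary evaluations at $A,B$ together with the interior term $dF(\beta,\gamma)$, and that the freedom to deform $C$ in its interior contributes nothing. Once this is granted, the argument is a direct application of the resolvent identity (\ref{eq:rk}), entirely parallel to the proof of Lemma \ref{eq:pfpa}.
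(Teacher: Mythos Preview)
Your proof is correct and follows essentially the same approach as the paper: differentiate the defining integral equation for $F$, collect the boundary terms at $A$ and $B$, rewrite the result as $(I+\tfrac{1}{2\pi}K)*dF = -\tfrac{1}{2\pi}(K(\cdot-B)F(B,\gamma)\,dB - K(\cdot-A)F(A,\gamma)\,dA)$, and then apply $I+R$ using (\ref{eq:rk}). Your additional remarks about why the contour dependence enters only through the endpoints are a welcome clarification but do not alter the argument.
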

\begin{proof}
Starting with the definition (\ref{eq:bigf}) and taking the total derivative we have
\begin{align*}
dF(\alpha, \gamma) + \frac{1}{2 \pi} & \int_A^B K(\alpha - \beta)  \;  dF(\beta,\gamma) \; d \beta  \\ & + \frac{1}{2 \pi} \Big( K(\alpha - B) F(B,\gamma) \;  dB - K(\alpha - A) F(A,\gamma) \;  dA \Big) = 0
\end{align*}
So
\begin{align}
\left(I + \frac{1}{2\pi} K  \right) *  dF(\alpha, \gamma) &= -\frac{1}{2\pi}\Big( K(\alpha - B) F(B, \gamma) \;  dB - K(\alpha - A) F(A,\gamma)  dA \Big)
\end{align}
Acting with $I + R $ on both sides and using equation (\ref{eq:rk}) gives the lemma.
\end{proof}

\begin{lem} \label{lem:DmDer}
For the total derivative of $D_-(\alpha)$ we have
\begin{align*}
dD_-(\alpha) = R(\alpha, B) \; D_-(B) \;  dB - R(\alpha, A) \; D_-(A) \;  dA. 
\end{align*}
\end{lem}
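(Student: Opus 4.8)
The plan is to differentiate the defining integral equation (\ref{eq:dpm}) for $D_-$ directly, in complete analogy with the proof of Lemma \ref{eq:delf}. Starting from $\left(I + \frac{1}{2\pi}K\right) * D_-(\alpha) = \frac{1}{2\pi}$ and taking the total derivative with respect to $q$ and $H$, I would note that the right-hand side is constant, so its variation vanishes. The only subtlety on the left is that the convolution is taken along the contour $C$ from $A$ to $B$, and both endpoints depend on the parameters. Thus the total derivative produces an interior term $\frac{1}{2\pi}\int_A^B K(\alpha-\beta)\,dD_-(\beta)\,d\beta$ together with the two boundary contributions coming from varying the limits of integration.

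Carrying this out gives
\begin{align*}
\left(I + \frac{1}{2\pi}K\right) * dD_-(\alpha) = -\frac{1}{2\pi}\Big( K(\alpha - B)\,D_-(B)\,dB - K(\alpha - A)\,D_-(A)\,dA \Big).
\end{align*}
The key structural point is that this has exactly the same form as the equation appearing midway through the proof of Lemma \ref{eq:delf}, with the factors $F(B,\gamma)$ and $F(A,\gamma)$ there replaced by $D_-(B)$ and $D_-(A)$ here. To finish, I would act on both sides with the resolvent $I + R$, which inverts $I + \frac{1}{2\pi}K$ by (\ref{eq:bigrresolv}), and invoke the identity (\ref{eq:rk}), namely $\left((I+R)*\frac{1}{2\pi}K\right)(\alpha,\beta) = -R(\alpha,\beta)$. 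Applying this to each boundary term converts the kernel $K$ into $-R$, and the two minus signs combine to yield
\begin{align*}
dD_-(\alpha) = R(\alpha, B)\,D_-(B)\,dB - R(\alpha, A)\,D_-(A)\,dA,
\end{align*}
which is the claim.

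I expect the main obstacle to be purely bookkeeping: correctly accounting for the boundary terms produced when the variation moves the endpoints $A$ and $B$ of the convolution contour, and keeping the signs straight through the inversion of the operator. An alternative route would be to differentiate the explicit representation $D_-(\alpha) = \frac{1}{2\pi}\left(1 + F(\alpha,B) - F(\alpha,A)\right)$ from Lemma \ref{lem:dressedcharge}, applying the chain rule to separate the intrinsic variation $dF$ (supplied by Lemma \ref{eq:delf}) from the explicit dependence on the second argument through $R(\alpha,\gamma) = \partial_\gamma F(\alpha,\gamma)$. Collecting terms then reassembles the brackets $1 + F(B,B) - F(B,A) = 2\pi D_-(B)$ and $1 + F(A,B) - F(A,A) = 2\pi D_-(A)$ via Lemma \ref{lem:dressedcharge}, giving the same result; but this second route requires more careful chain-rule accounting, so the direct differentiation of the integral equation is the cleaner path.
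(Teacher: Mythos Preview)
Your primary approach---differentiating the defining integral equation \eqref{eq:dpm} for $D_-$ directly and then inverting with $I+R$---is correct and yields the claim cleanly. Interestingly, the paper takes the \emph{other} route you describe as an alternative: it starts from the explicit representation $D_-(\alpha)=\frac{1}{2\pi}\bigl(1+F(\alpha,B)-F(\alpha,A)\bigr)$ of Lemma~\ref{lem:dressedcharge}, applies the chain rule to separate the intrinsic variation $dF$ (supplied by Lemma~\ref{eq:delf}) from the explicit endpoint dependence $\partial_\gamma F(\alpha,\gamma)=R(\alpha,\gamma)$, and then regroups the $dB$ and $dA$ coefficients back into $D_-(B)$ and $D_-(A)$ via Lemma~\ref{lem:dressedcharge}. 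Your method is arguably more economical, since it mirrors the proof of Lemma~\ref{eq:delf} verbatim with $F(\cdot,\gamma)$ replaced by $D_-$ and avoids the intermediate unpacking and repacking of $F$-values; the paper's method, on the other hand, makes the role of Lemma~\ref{eq:delf} explicit and keeps the argument at the level of already-established identities rather than returning to the integral equation. Either way, the content is the same.
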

\begin{proof} From the formula for $D_-(\alpha)$in Lemma \ref{lem:dressedcharge} we get
\begin{align*}
 dD_-(\alpha) = \frac{1}{2\pi} \left(dF(\alpha, B) - dF(\alpha, A) + \frac{\partial F(\alpha, B)}{\partial B}  dB - \frac{\partial F(\alpha, A)}{\partial A}  dA\right).
\end{align*}
Substituting the result from Lemma \ref{eq:delf} into the variation of $D_-(\alpha)$ we have
\begin{align*}
dD_-(\alpha) =& \frac{1}{2\pi}\left(R(\alpha, B) \; F(B,B) \;  dB - R(\alpha, A) \; F(A,B) \;  dA\right. \\ & \;\; \left. - R(\alpha, B) \; F(B,A) \;  dB + R(\alpha, A) \; F(A,A) \;  dA 
+ \frac{\partial F(\alpha, B)}{\partial B} dB - \frac{\partial F(\alpha, A)}{\partial A} dA \right) \\
=& \frac{1}{2\pi}\left(R(\alpha, B) \big( F(B,B) - F(B,A) + 1\big) \; dB -
R(\alpha, A) \big( F(A,B) - F(A,A) + 1\big) \;  dA \right)\\
 =&  R(\alpha, B) \; D_-(B) \;  dB - R(\alpha, A) \; D_-(A) \;  dA 
\end{align*}
as desired.
\end{proof}

\begin{lem} \label{lem:DpDer} 
For the total derivative of $D_+(\alpha)$ we have
\begin{align*}
dD_+(\alpha) = R(\alpha, B) \; \big(1 + D_+(B)\big) \; dB + R(\alpha, A) \; \big(1 - D_+(A)\big) \;  dA
\end{align*}
\end{lem}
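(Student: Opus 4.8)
The plan is to follow exactly the same route used for $D_-$ in Lemma \ref{lem:DmDer}, starting from the closed-form expression $D_+(\alpha) = F(\alpha,B) + F(\alpha,A)$ established in Lemma \ref{lem:dressedcharge}. The total derivative with respect to the endpoints carries two contributions for each summand: the implicit variation of $F(\alpha,\gamma)$ coming from the dependence of the contour $C$ on $A$ and $B$ (with the second argument $\gamma$ held fixed), and the explicit variation arising because in $D_+$ the second argument is itself the moving endpoint $B$ or $A$. Schematically,
\begin{align*}
dD_+(\alpha) = dF(\alpha,B) + \frac{\partial F(\alpha,B)}{\partial B}\,dB + dF(\alpha,A) + \frac{\partial F(\alpha,A)}{\partial A}\,dA.
\end{align*}

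Next I would substitute the two known ingredients. The implicit variation $dF(\alpha,\gamma)$ is supplied by Lemma \ref{eq:delf}, namely $dF(\alpha,\gamma) = R(\alpha,B)F(B,\gamma)\,dB - R(\alpha,A)F(A,\gamma)\,dA$, applied once at $\gamma = B$ and once at $\gamma = A$; the explicit terms use $\partial F(\alpha,\gamma)/\partial\gamma = R(\alpha,\gamma)$, which is just the defining relation (\ref{eq:bigr}) for $R$. After these substitutions every term in $dD_+(\alpha)$ is a multiple of $R(\alpha,B)$ or $R(\alpha,A)$.

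Finally I would collect the coefficients of $dB$ and $dA$. The $dB$ coefficient assembles as $R(\alpha,B)\big(F(B,B) + F(B,A) + 1\big)$, and the $dA$ coefficient as $R(\alpha,A)\big(1 - F(A,B) - F(A,A)\big)$. Recognizing $F(B,B)+F(B,A) = D_+(B)$ and $F(A,B)+F(A,A) = D_+(A)$ from Lemma \ref{lem:dressedcharge} evaluated at $\alpha = B$ and $\alpha = A$ then yields $dD_+(\alpha) = R(\alpha,B)(1+D_+(B))\,dB + R(\alpha,A)(1-D_+(A))\,dA$, as claimed. There is no genuine analytic obstacle here; the only point demanding care is the bookkeeping of the total derivative, keeping the implicit contour-variation $dF$ strictly separate from the explicit endpoint-argument derivative $\partial_\gamma F$, since conflating the two would drop exactly the $+1$ that produces the $1 \pm D_+$ combinations.
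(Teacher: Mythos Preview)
Your proposal is correct and follows essentially the same route as the paper's proof: start from $D_+(\alpha)=F(\alpha,B)+F(\alpha,A)$ (Lemma \ref{lem:dressedcharge}), split the total derivative into the implicit contour variation $dF$ from Lemma \ref{eq:delf} and the explicit endpoint derivative $\partial_\gamma F=R$, then collect the $dB$ and $dA$ coefficients and identify them with $1+D_+(B)$ and $1-D_+(A)$. One small remark: $\partial_\gamma F = R$ is the \emph{definition} of $R$ (stated just before equation (\ref{eq:bigr})), not equation (\ref{eq:bigr}) itself, which is the integral equation $R$ satisfies.
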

\begin{proof} By straightforward computation starting from Lemma \ref{lem:dressedcharge} we obtain
\begin{align*}
 dD_+(\alpha) =&  dF(\alpha, B) +  dF(\alpha, A) +  dB \; \frac{\partial F(\alpha, B)}{\partial B} +  dA \; \frac{\partial F(\alpha, A)}{\partial A}  \\
=& dF(\alpha, B) +  dF(\alpha, A) +  dB \; R(\alpha, B) +  dA \; R(\alpha, A) \\
=&  \;  R(\alpha, B) \; dB +  R(\alpha, A) dA + R(\alpha, B) F(B,B) \; dB - R(\alpha, A) F(A,B) \; dA \\
 & \; \; +R(\alpha, B) F(B,A) \; dB - R(\alpha, A) F(A,B) \; dA \\
 =& \; R(\alpha, B) \big( 1 + F(B,B) + F(B,A) \big) \; dB  + R(\alpha, A) \big( 1 - F(A,B) - F(A,A) \big) \; dA.
\end{align*}
Using Lemma \ref{lem:dressedcharge} again gives the desired result.
\end{proof}

\begin{lem}  \label{lem:J}
\begin{align*}
 ( 1 + D_+(B)) \; D_-(B) = \frac{1}{2 \pi}
\end{align*}

\end{lem}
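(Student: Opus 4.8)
The plan is to show that the quantity $P(q,H):=\big(1+D_+(B)\big)\,D_-(B)$ does not depend on $q$ or $H$, and then to fix the resulting constant by collapsing the contour in the limit $q\to 0$. Since the lemmas immediately preceding this statement compute precisely the total and $\alpha$-derivatives of $D_\pm$, this constancy argument uses exactly the tools just assembled.

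First I would compute the total derivatives of the \emph{endpoint} values $D_\pm(B)$. Because the endpoint $B$ itself moves with $(q,H)$, the chain rule gives $d\big(D_\pm(B)\big)=dD_\pm(\alpha)\big|_{\alpha=B}+\frac{\partial D_\pm}{\partial\alpha}(B)\,dB$, where the first piece is supplied by Lemmas \ref{lem:DmDer} and \ref{lem:DpDer} and the second by Lemma \ref{lem:DalphaDer}. In each case the $R(B,B)$ contributions from the two pieces cancel, leaving
\begin{align*}
d\big(D_-(B)\big) &= R(B,A)\,D_-(A)\,(dB-dA),\\
d\big(D_+(B)\big) &= -R(B,A)\,\big(1-D_+(A)\big)\,(dB-dA).
\end{align*}

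Next I would assemble $dP=D_-(B)\,d\big(D_+(B)\big)+\big(1+D_+(B)\big)\,d\big(D_-(B)\big)$, which equals $R(B,A)(dB-dA)$ times the bracket $-D_-(B)\big(1-D_+(A)\big)+\big(1+D_+(B)\big)D_-(A)$. Invoking the endpoint symmetries $D_-(A)=D_-(B)$ and $D_+(A)=-D_+(B)$ from property (b) (so that $1-D_+(A)=1+D_+(B)$), the bracket collapses to $0$, hence $dP=0$ and $P(q,H)$ is constant in both variables. To evaluate the constant I would then let $q\to 0$ at fixed $H$: the contour $C$ shrinks to a point, $A,B$ tend to a common value, and every convolution against $K$ over $C$ vanishes. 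The defining equations \eqref{eq:dpm} then force $D_-(\alpha)\to\tfrac{1}{2\pi}$, while Lemma \ref{lem:dressedcharge} together with $F(\alpha,\gamma)\to\tfrac{1}{2\pi}\Theta(\alpha-\gamma)$ and $\Theta(0)=0$ gives $D_+(B)=F(B,B)+F(B,A)\to 0$; thus $P=(1+0)\cdot\tfrac{1}{2\pi}=\tfrac{1}{2\pi}$, the claimed identity.

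The main obstacle I anticipate is the bookkeeping in the first step: correctly differentiating a quantity evaluated at the moving endpoint (the chain-rule term through $dB$) and checking that the $R(B,B)$ pieces cancel, so that $d\big(D_\pm(B)\big)$ reduces to the clean expressions above. A secondary point needing care is the $q\to 0$ collapse of the integral equations, namely verifying that $D_-$ remains bounded so that the convolutions over the shrinking contour genuinely vanish and the limiting values $D_-(\alpha)\to\tfrac{1}{2\pi}$, $D_+(B)\to 0$ are legitimate.
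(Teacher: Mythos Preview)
Your proposal is correct and follows essentially the same approach as the paper: show the quantity is constant by combining the total derivative of $D_\pm$ (Lemmas \ref{lem:DmDer}, \ref{lem:DpDer}) with the $\alpha$-derivative (Lemma \ref{lem:DalphaDer}) and the endpoint symmetries, then evaluate at the degenerate contour. The only cosmetic difference is that the paper first observes $D_\pm(B)$ depend on the single real parameter $a=\mathrm{Re}\,B$ (by imaginary-shift invariance of $F$) and differentiates in $a$, whereas you work directly with the pair $(q,H)$; the cancellation of the $R(B,B)$ terms and the evaluation at the collapsed contour are identical.
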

\begin{proof}
Let $J=( 1 + D_+(B)) \; D_-(B)$. It could be that $D_+,D_-$ depend on $A=-a+ib, B=a+ib$. Note, however, that $F(\alpha,\gamma)$ is invariant under a pure imaginary shift in $\alpha,\gamma,$ and the contour, so $D_+(B),D_-(B)$ can only depend on $a$. When $a=0$, we have $F(\alpha,\gamma) = \frac{1}{2\pi} \Theta(\alpha-\gamma)$ and it follows $J=\frac{1}{2\pi}$. It is left to show that the full derivative of $J$ with respect to $a$ is zero.

Taking derivatives we have
\begin{align*}
&\dd{}{a}D_-(B) = \frac{\partial D_-(\alpha)}{\partial \alpha}\Big|_{\alpha = B}\frac{\partial B}{\partial a} + \frac{\partial D_-(\alpha)}{\partial A}\Big|_{\alpha = B}  \frac{\partial A}{\partial a} + \frac{\partial D_-(\alpha)}{\partial B}\Big|_{\alpha = B}  \frac{\partial B}{\partial a} \\
&\dd{}{a}D_+(B) = \frac{\partial D_+(\alpha)}{\partial \alpha}\Big|_{\alpha = B}\frac{\partial B}{\partial a} + \frac{\partial D_+(\alpha)}{\partial A}\Big|_{\alpha = B}  \frac{\partial A}{\partial a} + \frac{\partial D_+(\alpha)}{\partial B}\Big|_{\alpha = B}  \frac{\partial B}{\partial a}.
\end{align*}
Using Lemmas \ref{lem:DalphaDer}, \ref{lem:DmDer}, and \ref{lem:DpDer} for the derivatives and recalling $D_\pm(A) = \mp D_\pm(B) = \mp D_\pm$, these become
\begin{equation*}
\begin{aligned}
    \dd{}{a}D_-(B)   =& R(B,A) D_-(A) - R(B,B) D_-(B) + R(B,A)D_-(A) + R(B,B) D_-(B)\\
      =& 2 R(B,A) D_-(B)\\
    \dd{}{a}D_+(B)  =& -R(B,B) (1-D_+(A)) - R(B,A) (1+D_+(B)) \\ & \;\; - R(B,A) (1-D_+(A)) + R(B,B)(1+D_+(B)) \\
      =& -2 R(B,A) (1+D_+(B)).
\end{aligned}
\end{equation*}
From which it follows
\begin{align*}
\dd{}{a}J & = D_-(B)\dd{}{a}D_+(B) + (1+D_+(B))\dd{}{a}D_-(B)  \\ 
& = -2 D_-(B) R(B,A) (1+D_+(B)) + 2 (1+D_+(B)) R(B,A) D_-(B) \\
& = 0.
\end{align*}
We see that $(1+D_+(B))D_-(B) = \frac{1}{2\pi}$, independent of the endpoints of the contour. 
\end{proof}

Introduce
\begin{equation}\label{eq:xi}
\begin{aligned}
\xi = 2 \pi i \Big( f'(B) + \int_C f'(\alpha) R(\alpha, B)  \; d \alpha \Big),  \\
\widetilde{\xi} = 2 \pi i \Big( f'(A) + \int_C f'(\alpha) R(\alpha, A) \;  d \alpha \Big).
\end{aligned}
\end{equation}
Note that since $ f(\alpha)^* = f(-\alpha^*) $ we have $ \widetilde{\xi} = \xi^* $.

\begin{prop} The following hold
\begin{align} \label{eq:eq1}
d\frac{\partial g}{\partial H}  &= - \frac{1}{\pi} D_-(B) \; \xi \; dB + \frac{1}{\pi} \; D_-(A)\;  \widetilde{\xi} \; dA  \\
\label{eq:eq2}
 d\frac{\partial g}{\partial q}  &= - \frac{i}{4 \pi} \big( 1 + D_+(B) \big) \xi \;  dB - \frac{i}{4 \pi } \big( 1 - D_+(A) \big) \w \xi \;  dA.
\end{align}
\end{prop}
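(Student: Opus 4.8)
The plan is to differentiate the two first-derivative formulas from Proposition \ref{prop:firstderivatives}, treating the endpoints $A=A(q,H)$ and $B=B(q,H)$ as functions of the thermodynamic parameters. Each right-hand side is an integral along $C$ whose limits are the moving endpoints and whose integrand $D_\pm(\alpha)$ itself depends on $q,H$ through $A,B$. Hence the total derivative of each such integral $\int_A^B f'(\alpha)D_\pm(\alpha)\,d\alpha$ splits into a boundary piece, picking up $dB\,f'(B)D_\pm(B)-dA\,f'(A)D_\pm(A)$, and a bulk piece $\int_C f'(\alpha)\,dD_\pm(\alpha)\,d\alpha$. The whole computation reduces to inserting the variation formulas for $D_\pm$ proved in Lemmas \ref{lem:DmDer} and \ref{lem:DpDer} into the bulk piece and then regrouping by $dB$ and $dA$.

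For equation (\ref{eq:eq1}), I start from $\frac{\pa g}{\pa H}=-2i\int_C f'(\alpha)D_-(\alpha)\,d\alpha$ and substitute Lemma \ref{lem:DmDer}, namely $dD_-(\alpha)=R(\alpha,B)D_-(B)\,dB-R(\alpha,A)D_-(A)\,dA$. The bulk integral then becomes $dB\,D_-(B)\int_C f'(\alpha)R(\alpha,B)\,d\alpha - dA\,D_-(A)\int_C f'(\alpha)R(\alpha,A)\,d\alpha$. Adding the boundary piece, the coefficient of $dB$ is $D_-(B)\bigl(f'(B)+\int_C f'(\alpha)R(\alpha,B)\,d\alpha\bigr)$, and the coefficient of $dA$ is $-D_-(A)\bigl(f'(A)+\int_C f'(\alpha)R(\alpha,A)\,d\alpha\bigr)$. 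Recognizing the parenthetical expressions as $\tfrac{1}{2\pi i}\xi$ and $\tfrac{1}{2\pi i}\w\xi$ by definition (\ref{eq:xi}), and using $-2i\cdot\tfrac{1}{2\pi i}=-\tfrac1\pi$, yields (\ref{eq:eq1}).

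For equation (\ref{eq:eq2}), I differentiate $\frac{\pa g}{\pa q}=\tfrac12(f(B)+f(A))+\tfrac12\int_C f'(\alpha)D_+(\alpha)\,d\alpha$. The derivative of the first term contributes $\tfrac12 f'(B)\,dB+\tfrac12 f'(A)\,dA$; the integral is handled exactly as above, this time substituting Lemma \ref{lem:DpDer}, $dD_+(\alpha)=R(\alpha,B)(1+D_+(B))\,dB+R(\alpha,A)(1-D_+(A))\,dA$. Collecting the $dB$ terms gives $\tfrac12 f'(B)+\tfrac12 f'(B)D_+(B)+\tfrac12(1+D_+(B))\int_C f'(\alpha)R(\alpha,B)\,d\alpha$, which factors cleanly as $\tfrac12(1+D_+(B))\bigl(f'(B)+\int_C f'(\alpha)R(\alpha,B)\,d\alpha\bigr)=\tfrac{1}{4\pi i}(1+D_+(B))\,\xi=-\tfrac{i}{4\pi}(1+D_+(B))\,\xi$; the $dA$ terms factor symmetrically as $-\tfrac{i}{4\pi}(1-D_+(A))\,\w\xi$, giving (\ref{eq:eq2}).

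The calculation is mechanical once the variation lemmas are in hand, so the only real subtlety — the step I expect to require the most care — is the regrouping: one must verify that the bare boundary terms $f'(B)D_\pm(B)$ (and, for the $q$-derivative, the $\tfrac12 f'(B)$ from differentiating $f(B)$) combine with the $R$-kernel integrals so that a common factor of $D_-(B)$, respectively $1+D_+(B)$, can be pulled out to expose precisely the combination defining $\xi$. This matching of the endpoint contribution against the derivative-of-integrand contribution is what makes $\xi$ and $\w\xi$ appear, and it is the reason the definitions (\ref{eq:xi}) were introduced in the first place; once this factorization is observed, the identification $\w\xi=\xi^*$ and the stated identities follow immediately.
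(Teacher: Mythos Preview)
Your proposal is correct and follows essentially the same route as the paper's own proof: differentiate the expressions in Proposition~\ref{prop:firstderivatives}, split into boundary and bulk contributions, insert Lemmas~\ref{lem:DmDer} and~\ref{lem:DpDer} for $dD_\mp(\alpha)$, regroup by $dB$ and $dA$, and recognize the parenthetical factors as $\xi/(2\pi i)$ and $\w\xi/(2\pi i)$. The one extraneous remark is your closing mention of $\w\xi=\xi^*$ --- that identity is not used in this proposition and can be dropped here.
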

\begin{proof}
We first prove (\ref{eq:eq1}). Starting from Prop. \ref{prop:firstderivatives} we have
\begin{align*}
d\frac{\partial g}{\partial H} =& - 2 i\; d\int_C f'(\alpha) D_-(\alpha) d \alpha \\
=& - 2 i \left(  dB  \; f'(B) \; D_-(B) -  dA \; f'(A)\; D_-(A) + \int_C f'(\alpha)  dD_-(\alpha) \; d \alpha \right)\\
=& - 2 i \Big(  dB \; f'(B)\;  D_-(B) -  dA \; f'(A) D_-(A)  \\ & \; \; \; \; + \int_C f'(\alpha) \; \big( R(\alpha, B) \; D_-(B) \; dB - R(\alpha, A)\;  D_-(A) \;  dA \big) \; d \alpha \Big) \\
 =& - 2 i \Big(  dB \; D_-(B) \big( f'(B) + \int_C f'(\alpha) R(\alpha, B) d \alpha\big)  \\ & \; \; \; \; \; \; -  dA \; D_-(A) \big( f'(A) + \int_C f'(\alpha) R(\alpha, A) \; d \alpha \big)  \Big)
\end{align*}
 where we use Lemma \ref{lem:DmDer}. Now let us prove equation (\ref{eq:eq2}). From Prop. \ref{prop:firstderivatives} we have
\begin{align*}
 d\frac{\partial g}{\partial q} =& \frac{1}{2} ( f'(B) \; dB + f'(A) \; dA ) + \frac{1}{2} ( f'(B) D_+(B) \;  dB - f'(A) D_+(A) \; dA )  \\ & + \frac{1}{2} \int_C f'(\alpha) \Big( R(\alpha,B)\;  (1 + D_+(B) ) \;  dB + R(\alpha, A) \;  ( 1 - D_-(A) ) \; dA \Big) d \alpha  \\
=& \frac{1}{2}  \Big( f'(B)  \big(1 + D_+(B) \big) + (1 + D_+(B) ) \int_C f'(\alpha) R(\alpha, B) d \alpha \Big) \;  dB \\ & + \frac{1}{2} \Big( f'(A) \big(1 - D_-(A) \big) + (1 - D_-(A) ) \int_C f'(\alpha) R(\alpha, A) d \alpha  \Big) \;  dA
\end{align*}
where we use Lemma \ref{lem:DpDer}.
\end{proof}

Now let us express $ dA ,  dB $ in terms of $ dH,  dq $.
Substituting the formula (\ref{deltat}) for $ dt(\alpha) $ into equation (\ref{deltatAB}) we obtain
\begin{align*}
2 i \; D_-(B) \;  dH - \frac{1}{2}\; D_+(B) \; dq + \rho(B) dB &= \frac{1}{2} dq  \\
2 i \; D_-(A) \;  dH - \frac{1}{2}\; D_+(A) \; dq + \rho(A) dA &= -\frac{1}{2} dq
\end{align*}
or
\begin{equation}\label{eq:dab}
\begin{aligned}
dB &= \left( \frac{- 2 i D_-(B) }{\rho(B) }\right) dH + \left( \frac{1+ D_+(B) }{2\rho(B)}\right) dq \\
dA &= \left( \frac{-2 i D_-(A) }{\rho(A)} \right) dH -  \left(\frac{1- D_+(A)}{2\rho(A)} \right) dq.
\end{aligned}
\end{equation}
\begin{prop} \label{prop:gSD1}
The second derivatives of $g(q,H)$ are given by
\begin{align*}
\frac{\partial^2 g}{\partial H^2} &= \frac{2 i}{\pi}  \left( \frac{D_-(B)^2 }{\rho(B) } \xi  - \frac{D_-(A)^2 }{\rho(A) } \w \xi  \right) \\
\frac{\partial^2 g}{\partial H \partial q} &= - \frac{1}{2 \pi} \left( \frac{1+ D_+(B) }{\rho(B)}\;  D_-(B) \; \xi  + \frac{1- D_+(A) }{\rho(A)}\;  D_-(A)  \; \w \xi  \right) \\
\frac{\partial^2 g}{\partial q^2} &= - \frac{i}{8 \pi} \left( \frac{(1  + D_+(B))^2}{\rho(B)} \; \xi - \frac{(1 - D_+(A))^2 }{\rho(A)} \; \w \xi \right) \\
\end{align*}
\end{prop}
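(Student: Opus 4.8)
The plan is to obtain all three second derivatives by a single application of the chain rule, with no new analytic input. The preceding proposition already records the differentials $d\frac{\pa g}{\pa H}$ and $d\frac{\pa g}{\pa q}$ as explicit linear combinations of $dA$ and $dB$ in equations (\ref{eq:eq1}) and (\ref{eq:eq2}); meanwhile equations (\ref{eq:dab}) express $dA$ and $dB$ as linear combinations of $dH$ and $dq$. Since $\frac{\pa g}{\pa H}$ and $\frac{\pa g}{\pa q}$ are functions of $(q,H)$, their differentials are $\frac{\pa^2 g}{\pa H^2}\,dH + \frac{\pa^2 g}{\pa q\,\pa H}\,dq$ and $\frac{\pa^2 g}{\pa H\,\pa q}\,dH + \frac{\pa^2 g}{\pa q^2}\,dq$. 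I would therefore substitute (\ref{eq:dab}) into (\ref{eq:eq1})--(\ref{eq:eq2}) and read off the coefficients of $dH$ and $dq$.

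Carrying this out for $\frac{\pa^2 g}{\pa H^2}$, I take the $dH$-parts of $dB$ and $dA$, namely $-2iD_-(B)/\rho(B)$ and $-2iD_-(A)/\rho(A)$, and feed them into (\ref{eq:eq1}); the scalar factor $-\frac{1}{\pi}\cdot(-2i)=\frac{2i}{\pi}$ yields the stated formula. For $\frac{\pa^2 g}{\pa q^2}$, I take the $dq$-parts $\frac{1+D_+(B)}{2\rho(B)}$ and $-\frac{1-D_+(A)}{2\rho(A)}$ and insert them into (\ref{eq:eq2}), where $-\frac{i}{4\pi}\cdot\frac{1}{2}=-\frac{i}{8\pi}$ produces the claimed expression, the minus sign in the $dA$-coefficient combining with the minus in (\ref{eq:eq2}) to furnish the relative minus of the final formula. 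The mixed derivative $\frac{\pa^2 g}{\pa H\,\pa q}$ I would extract as the $dq$-coefficient of (\ref{eq:eq1}), using $-\frac{1}{\pi}\cdot\frac{1}{2}=-\frac{1}{2\pi}$.

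The computation is entirely mechanical, so there is no conceptual obstacle; the only care needed is in tracking the signs and the factors of $i$ when multiplying the coefficients of (\ref{eq:eq1})--(\ref{eq:eq2}) against those of (\ref{eq:dab}). A built-in consistency check is the equality of mixed partials: computing $\frac{\pa^2 g}{\pa H\,\pa q}$ instead as the $dH$-coefficient of (\ref{eq:eq2}) gives $-\frac{i}{4\pi}\cdot(-2i)=-\frac{1}{2\pi}$ times the same bracket, so both routes land on $-\frac{1}{2\pi}\big(\frac{1+D_+(B)}{\rho(B)}D_-(B)\,\xi + \frac{1-D_+(A)}{\rho(A)}D_-(A)\,\w\xi\big)$, confirming symmetry of the Hessian without invoking any further identity such as Lemma \ref{lem:J}.
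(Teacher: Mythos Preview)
Your proposal is correct and follows essentially the same approach as the paper: substitute the expressions (\ref{eq:dab}) for $dA$, $dB$ into (\ref{eq:eq1})--(\ref{eq:eq2}) and read off the coefficients of $dH$ and $dq$. Your additional consistency check on the mixed partials is a nice touch that the paper does not spell out.
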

\begin{proof}
Combining (\ref{eq:dab}) with formulas for the second variation, (\ref{eq:eq1}) and (\ref{eq:eq2}), we obtain
\begin{align*}
d\frac{\partial g}{\partial H}  = &  - \frac{1}{\pi} \left(\left( \frac{- 2 i D_-(B) }{\rho(B) }\right) d H +  \frac{1+ D_+(B) }{2\rho(B)} d q \right) \; D_-(B) \; \xi\\&  + \frac{1}{\pi} \left(  \left( \frac{-2 i D_-(A) }{\rho(A)} \right) d H -  \frac{1- D_+(A)}{2\rho(A)} d q \right) \; D_-(A)\;  \widetilde{\xi} \\
= & \; i \left( \frac{D_-(B)^2 }{\rho(B) } \xi  - \frac{D_-(A)^2 }{\rho(A) } \w \xi  \right) d H \\
& - \frac{1}{4} \left( \frac{1+ D_+(B) }{\rho(B)}\;  D_-(B) \; \xi  + \frac{1- D_+(A) }{\rho(B)}\;  D_-(A) \; \w \xi  \right) d q
\end{align*}
and
\begin{align*}
d \frac{\partial g}{\partial q}  = & - \frac{i}{4 \pi} \big( 1 + D_+(B) \big) \; \xi \; \left( \left( \frac{- 2 i D_-(B) }{\rho(B) }\right) d H +  \frac{1+ D_+(B) }{2\rho(B)} d q \right) \\ & - \frac{i}{4 \pi} \big( 1 - D_+(A) \big) \; \w \xi \; \left( \left( \frac{-2 i D_-(A) }{\rho(A)} \right) d H -  \frac{1- D_+(A)}{2\rho(A)} d q \right) \\
= & - \frac{1}{2 \pi} \left( ( 1 + D_+(B)) \; \frac{D_-(B)}{\rho(B)} \; \xi +  ( 1 - D_+(A)) \; \frac{D_-(A)}{\rho(A)}\; \w \xi \right) d H \\
& - \frac{i}{8 \pi} \left( \frac{(1  + D_+(B))^2}{\rho(B)} \; \xi- \frac{(1 - D_+(A))^2 }{\rho(A)} \; \w \xi  \right) \; d q.
\end{align*}
\end{proof}

Recall that the quantity we are interested in is equation (\ref{eq:HqH}) in which the role of $f(\alpha)$ is played by $\psi_u^{\pm}(\alpha)$. The functions $\psi_u^{\pm}(\alpha)$ depend on the spectral parameter $u$ as $\psi^{\pm}(\alpha + i u)$. We will assume then that $ f $ depends on the spectral parameter in the same way. In particular, $ \frac{\partial f}{\partial u} = i f'$.
\begin{prop} \label{prop:gSD2}
\begin{align*}
\frac{\partial^2 g}{\partial H \partial u }  &= - \frac{i}{\pi} \left( D_-(B) \xi  - D_-(A) \w \xi \right) \\
\frac{\partial^2 g}{\partial q \partial u }  &= \frac{1}{4 \pi} \left(  (1 + D_+(B) ) \; \xi  +   (1 - D_+(A) ) \; \w \xi \right)
\end{align*}
\end{prop}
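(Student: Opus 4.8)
The plan is to differentiate the two first-derivative formulas of Proposition \ref{prop:firstderivatives} with respect to $u$, exploiting the prescribed $u$-dependence $\partial_u f = i f'$. The essential observation—which is really the only conceptual point—is that the contour $C$, its endpoints $A,B$, the density $\rho$, and consequently the dressing functions $F,R,D_\pm$ are all determined by the Bethe integral equation \eqref{eq:inteqrho} together with the conditions \eqref{eq:partialt} and \eqref{eq:norm}. None of this data involves the spectral parameter $u$: the kernel $K$ and the momentum $p$ depend only on $\eta$. Hence $\partial_u$ is inert on the entire dressing apparatus and commutes with $\int_C\cdots\,d\alpha$ and with evaluation at $A,B$, acting solely by replacing $f$ with $if'$ (equivalently $f'$ with $if''$).

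Applying this to $\partial g/\partial H = -2i\int_C f'(\alpha)D_-(\alpha)\,d\alpha$ gives $\partial^2 g/\partial H\partial u = 2\int_C f''(\alpha)D_-(\alpha)\,d\alpha$. I would then integrate by parts along $C$, producing the boundary term $f'(B)D_-(B)-f'(A)D_-(A)$ and the integral $-\int_C f'(\alpha)\,\partial_\alpha D_-(\alpha)\,d\alpha$. Substituting the formula for $\partial_\alpha D_-$ from Lemma \ref{lem:DalphaDer} and regrouping, the boundary contributions and the $R$-integrals assemble precisely into $f'(B)+\int_C f'R(\cdot,B)$ and $f'(A)+\int_C f'R(\cdot,A)$, i.e. into $\xi/(2\pi i)$ and $\w\xi/(2\pi i)$ from \eqref{eq:xi}. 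Tracking the prefactor $2$ against $1/(2\pi i)=-i/(2\pi)$ then yields $\partial^2 g/\partial H\partial u = -\tfrac{i}{\pi}\bigl(D_-(B)\xi - D_-(A)\w\xi\bigr)$.

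The second identity is handled identically, starting from $\partial g/\partial q = \tfrac12(f(B)+f(A))+\tfrac12\int_C f'D_+\,d\alpha$. Here $\partial_u$ produces $\tfrac{i}{2}(f'(B)+f'(A))$ from the endpoint terms and $\tfrac{i}{2}\int_C f''D_+\,d\alpha$ from the integral. Integrating by parts and inserting the formula for $\partial_\alpha D_+$ from Lemma \ref{lem:DalphaDer}, the endpoint pieces combine with the boundary terms as $f'(B)(1+D_+(B))+f'(A)(1-D_+(A))$, while the $R$-integrals carry the same coefficients $(1+D_+(B))$ and $(1-D_+(A))$; together these reorganize into $(1+D_+(B))\xi$ and $(1-D_+(A))\w\xi$, giving $\partial^2 g/\partial q\partial u = \tfrac{1}{4\pi}\bigl((1+D_+(B))\xi + (1-D_+(A))\w\xi\bigr)$.

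There is no substantive obstacle beyond careful bookkeeping. The single step one must get right is the $u$-independence of all contour data, since that is what lets $\partial_u$ act purely through $f$ and allows the one integration by parts—combined with Lemma \ref{lem:DalphaDer}—to collapse every term into the $\xi,\w\xi$ notation of \eqref{eq:xi}. I would note in passing that the sign conventions in $\partial_\alpha D_\pm$ differ between the two cases (the $A$-endpoint contribution flips), which is exactly what distributes the minus signs correctly between the two formulas.
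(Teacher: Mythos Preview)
Your proposal is correct and follows essentially the same argument as the paper: both use Proposition~\ref{prop:firstderivatives}, the $u$-independence of the contour and dressing data, one integration by parts along $C$, Lemma~\ref{lem:DalphaDer} for $\partial_\alpha D_\pm$, and the recognition of the resulting combinations as $\xi,\w\xi$. The only cosmetic difference is the order of operations: the paper integrates by parts first (so that only $f$ appears) and then applies $\partial_u = i\,\partial_\alpha$, whereas you apply $\partial_u$ first (producing $f''$) and then integrate by parts; since the two operations commute on the $u$-independent dressing data, the computations are interchangeable.
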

\begin{proof}
Starting with Proposition \ref{prop:firstderivatives}
\begin{align*}
\frac{\partial g}{\partial H } &= - 2 i \int_C f'(\alpha) D_-(\alpha) \; d \alpha \\
&=-2 i \Big( f(B) D_-(B) - f(A) D_-(A) - \int_C f(\alpha) D'_-(\alpha) \; d \alpha \Big)\\
&= -2 i \Big( f(B) D_-(B) - f(A) D_-(A) + \int_C f(\alpha) \big( R(\alpha, B) D_-(B)  - R(\alpha,A ) D_-(A) \big) \; d \alpha \Big).
\end{align*}
Differentiating with respect to $ u $ gives
\begin{align*}
\frac{\partial^2 g}{\partial H \partial u} &= 2 \Big( f'(B) D_-(B) - f'(A) D_-(A) + \int_C f'(\alpha) \big( R(\alpha, B) D_-(B)  - R(\alpha, A) D_-(A) \big) \; d \alpha \Big) \\
&= 2 D_-(B)  \Big( f'(B) + \int_C f'(\alpha) R(\alpha, B) \; d \alpha \Big) - 2 D_-(A) \Big( f'(A) + \int_C f'(\alpha) R(\alpha, A)  \; d \alpha \Big) \\
&= - \frac{ i}{\pi} \left( D_-(B) \; \xi  - D_-(A) \; \w \xi \right).
\end{align*}

Similarly
\begin{align*}
\frac{\partial g}{\partial q} =& \; \frac{1}{2} \left( f(B) + f(A) \right) + \frac{1}{2} \int_C f'(\alpha) D_+(\alpha) \; d \alpha \\
=& \; \frac{1}{2} f(B) ( 1 + D_+(B) ) + \frac{1}{2} f(A) ( 1 - D_-(A) ) - \frac{1}{2} \int_C f(\alpha) D_+'(\alpha) \; d \alpha \\
=& \; \frac{1}{2} f(B) ( 1 + D_+(B) ) + \frac{1}{2} f(A) ( 1 - D_-(A) ) + \frac{1}{2} \int_C f(\alpha) R(\alpha,B) ( 1 + D_+(B) ) d \alpha  \\ & \; + \frac{1}{2} \int_C f(\alpha) R(\alpha, A) ( 1 - D_+(A) ) d \alpha.
\end{align*}
Differentiating with respect to $ u $ gives
\begin{align*}
\frac{\partial^2 g}{\partial q \partial u} = \frac{1}{4 \pi} \;\left( (1 + D_+(B) ) \; \xi  + \; (1 - D_+(A) ) \; \w \xi \right).
\end{align*}
\end{proof}

\section{Limit Shape Shape and its integrability}\label{sec:4}

\subsection{The variational principle}\label{sec:varprinc}
Using the same arguments as in the case of dimer models \cite{CKP}, and in the homogeneous six vertex model \cite{ZJ,PR,RS}, one can formulate the variational principle for the six vertex model with inhomogeneous weights.

Introduce the height function for the six vertex model in the usual way, see for example \cite{RS}. It is an integer valued function $\theta_{n,m}$ defined on faces of the square grid. Its value on a face is determined by local rules shown in Fig. \ref{fig:6Vheight}. After choosing the value $\theta_{a,b}$ at a single face, the local rules give a bijection between possible height functions and six vertex configurations on  a planar simply connected domain.

However, local rules do not define a height function on a cylinder. In this case, the local definition of the height function may lead to a discontinuity along a noncontractible  path. This simply means that the global object defined by these local rules is not a function but a section of the corresponding line bundle.

To have a height function be a true function we cut the cylinder at column $n=0$ and fix its value at a face with the condition $\theta_{0,0}=0$. Thus, the height function for us is a function $\theta_{n,m}$ with $n=0,1,\dots, N, m=0,1,\dots, M$. If a six vertex configuration has $n_0$ paths entering from the bottom and exiting from the top, the height function has monodromy
\[
\theta_{N,m} -\theta_{0,m} =n_0.
\]

\begin{figure}[h]
\begin{tabular}{cccccc}
\begin{tikzpicture}
\draw[ultra thick] (-1,0) -- (0,0); \draw[ultra thick] (0,0) -- (1,0);
\draw[ultra thick] (0,-1) -- (0,0); \draw[ultra thick] (0,0) -- (0,1);
\node[font=\small] at (-0.5,-0.5) {$\theta$};
\node[font=\small] at (0.5,-0.5) {$\theta + 1$};
\node[font=\small] at (-0.5,0.5) {$\theta + 1$};
\node[font=\small] at (0.5,0.5) {$\theta + 2$};
\end{tikzpicture}
&
\begin{tikzpicture}
\draw (-1,0) -- (0,0); \draw (0,0) -- (1,0);
\draw (0,-1) -- (0,0); \draw (0,0) -- (0,1);
\node[font=\small] at (-0.5,-0.5) {$\theta$};
\node[font=\small] at (0.5,-0.5) {$\theta$};
\node[font=\small] at (-0.5,0.5) {$\theta$};
\node[font=\small] at (0.5,0.5) {$\theta$};
\end{tikzpicture} 
&
\begin{tikzpicture}
\draw[ultra thick] (-1,0) -- (0,0); \draw[ultra thick] (0,0) -- (1,0);
\draw (0,-1) -- (0,0); \draw (0,0) -- (0,1);
\node[font=\small] at (-0.5,-0.5) {$\theta$};
\node[font=\small] at (0.5,-0.5) {$\theta$};
\node[font=\small] at (-0.5,0.5) {$\theta + 1$};
\node[font=\small] at (0.5,0.5) {$\theta + 1$};
\end{tikzpicture}
&
\begin{tikzpicture}
\draw (-1,0) -- (0,0); \draw (0,0) -- (1,0);
\draw[ultra thick] (0,-1) -- (0,0); \draw[ultra thick] (0,0) -- (0,1);
\node[font=\small] at (-0.5,-0.5) {$\theta$};
\node[font=\small] at (0.5,-0.5) {$\theta + 1$};
\node[font=\small] at (-0.5,0.5) {$\theta$};
\node[font=\small] at (0.5,0.5) {$\theta + 1$};
\end{tikzpicture}
&
\begin{tikzpicture}
\draw[ultra thick] (-1,0) -- (0,0); \draw (0,0) -- (1,0);
\draw (0,-1) -- (0,0); \draw[ultra thick] (0,0) -- (0,1);
\node[font=\small] at (-0.5,-0.5) {$\theta$};
\node[font=\small] at (0.5,-0.5) {$\theta$};
\node[font=\small] at (-0.5,0.5) {$\theta + 1$};
\node[font=\small] at (0.5,0.5) {$\theta $};
\end{tikzpicture}
&
\begin{tikzpicture}
\draw (-1,0) -- (0,0); \draw[ultra thick] (0,0) -- (1,0);
\draw[ultra thick] (0,-1) -- (0,0); \draw (0,0) -- (0,1);
\node[font=\small] at (-0.5,-0.5) {$\theta$};
\node[font=\small] at (0.5,-0.5) {$\theta + 1$};
\node[font=\small] at (-0.5,0.5) {$\theta$};
\node[font=\small] at (0.5,0.5) {$\theta $};
\end{tikzpicture}

\end{tabular}
\caption{Local rules for the height function of the six vertex model.}
\label{fig:6Vheight}
\end{figure}
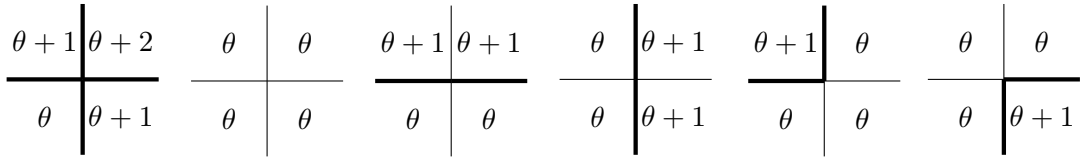

Let us describe the thermodynamic limit for a torus and the formation of the limit shape. 

Fix a sequence $\{\epsilon_k\}_{k=1}^\infty$ such that $\epsilon_k\to 0$ as $k\to \infty$ and $\epsilon_k>0$. One should think of it as a sequence of mesh lengths. Consider a sequence of cylinders of size $N_k\times M_k$ such that 
$N_k\epsilon_k\to L$ and $M_k\epsilon_k\to T$ as $k\to \infty$. Fix two functions $\phi_1,\phi_2: [0,L]\to \R$ satisfying conditions
\[
|\phi_i(x)-\phi_i(x')|\leq |x-x'|
\]
assuming that $\phi_i(L)=\phi_i(0)+q$ for some $0<q<1$.

Fix a sequence of nonnegative  integers $n_0^k$ such that $\epsilon_kn^k_0\to q$ as $k\to \infty$.
For each $k$ fix functions $\theta^{k,1},\theta^{k,2}: \{0, 1,\dots, N_k\}\to \ZZ $ such that
\[
|\theta^{k,i}_{n}-\theta^{k,i}_{n,}|\leq |n-n'|, \ \ \theta^{k,i}_{N_k}-\theta^{k,i}_0=n^k_0
\]
Assume that as $k\to \infty$, normalized boundary height functions $\epsilon_k\theta^{k,i}$, regarded as a piece-wise linear function on $[0,L]$, converges to $\phi_i$ in $L_\infty$ topology. Also, assume that $u_i=u(\epsilon_k i)$ and $v_i=v(\epsilon_k i)$
where $u(x)$ and $v(y)$ are smooth functions. 

Define the space $H_{L,T,q}^{\phi_1,\phi_2}$ of asymptotic height functions as the space of mappings
$h:[0,L]\times [0,T]\to \R$ with the properties
\[
|h(x,y) - h(x',y)| <  |x-x'|, \ \ |h(x,y) - h(x,y')| <  |y-y'|
\]
\[
h(L,y)=h(0,y)+q, \ \ h(x,0)=\phi_1(x), \ \ h(x,T)=\phi_2(x)
\]
where $0\leq q\leq 1$.

Applying the same arguments as \cite{CKP} we expect that the sequence of random variables $\theta_{n,m}^k$ (height functions for the six vertex  model on $N_k\times M_k$ cylinders with boundary conditions $\theta^{k,1}$ and $\theta^{k,2}$) have the following asymptotic:
\[
\epsilon_k\theta_{n,m}\to h_0(x,y)+\epsilon_k \phi(x,y)+\dots
\]
Here  we assume that $\epsilon_kn\to x, \epsilon_km\to y$.
The deterministic functions $h_0(x,y)$ is the limit shape and the random variable $\phi(x,y)$  describes Gaussian fluctuations around the limit shape. The convergence is in probability, 
with respect to to the six vertex probability measures on 
$N_k\times M_k$ cylinders.

The limit shape $h_0$ in the inhomogeneous six vertex model is the minimizer of the functional
\begin{align} \label{eq:st}
S[h] = \int_0^L \int_0^T \; \sigma_{u(y) - v(x)} (\partial_x h, \partial_y h) dydx
\end{align}
in the space $H_{L,T,q}^{\phi_1,\phi_2}$.

\subsection{Hamiltonian formulation}
As was done in the homogeneous case \cite{RS} let us reformulate the variational principle in the Hamiltonian framework. For the time being we will assume that the six vertex model is homogeneous in the vertical direction, i.e. $u(y)=u$.

Define the space of asymptotic height functions $\mathcal{H}_{L,q}$ in one horizontal layer as the space of mappings $h: [0,L]\to \R$ satisfying the Lipschitz and periodicity conditions
\[
|h(x) - h(x')<|x-x'|, \ \ h(L,y) = h(0,y) + q,
\]
where $0<q<1$ is fixed. Elements of the cotangent space $T^*\mathcal{H}_{L,q}$ can be identified with pairs of functions $(\pi(x),h(x))$ where $h(x)$ is as above and $\pi(x)$ is a function periodic in $x$. It is an infinite dimensional symplectic manifold with symplectic form
\begin{equation}
\omega = \int_0^L \delta \pi(x) \wedge \delta h(x) dx.
\end{equation}

Now let us formulate the Hamiltonian version of the variational principal. It is an easy exercise to check that if $(\pi(x,y),h(x,y))$ is a flow line of the Hamiltonian vector field on $T^* \mathcal{H}_{L,q}$ generated by the Hamiltonian function
\begin{equation} \label{main-Ham}
H_u(\pi,h) = \int_0^L \mathcal{H}_{u-v(x)}(\pa_xh(x), \pi(x))dx.
\end{equation}
then $h(x,y)$ is a solution to the Euler-Lagrange equations. Here the function $\mathcal{H}_u(q,H)$ is the semigrand canonical free energy (\ref{eq:semigrand}).

Moreover, there is unique flow line $(\pi_0,h_0)$ of this Hamiltonian vector field  
which connects Lagrangian subspaces $T^*_{\phi_1}\mathcal{H}_{q,L}$ and $T^*_{\phi_2}\mathcal{H}_{q,L}$ in Euclidian  time $T$. And the component $h_0$ is the minimizer of \ref{eq:st}. 
Also, the pair $(\pi_0,h_0)$ is the unique critical point in $T^* H_{L,T,q}^{\phi_1,\phi_2}$ of the Hamilton-Jacobi functional
\begin{equation} \label{eq:HJ}
S_{HJ} (\pi,h) = \int_0^L\int_0^T (\pi(x,y)\partial_y h(x,y) - \mathcal{H}_{u-v(x)}(\pa_xh(x,y), \pi(x,y)))dydx
\end{equation}
Indeed, first minimizing in $\pi$ and evaluating this functional at the unique critical point we will
arrive to the minimization problem for (\ref{eq:st}). The existence and uniqueness of the
critical point follows from the convexity of $\mathcal{H}_u(s,t)$ in $t$.

\subsection{Poisson commutativity of the Hamiltonians}
The main result of this paper is the Poisson commutativity of the family $H_u(\pi,h)$ where $v(x)$ is an arbitrary smooth functions as described in \ref{sec:varprinc}. The first step is the following proposition that characterizes the commutativity of the Hamiltonians $ H_u $ in terms of differential identities for the densities $ \H_{u-v(x)}$.

\begin{thm}\label{main1} The Hamiltonians \ref{main-Ham} form Poisson commutative family
\[
\{ H_u, H_w \} = 0
\]
for any $u$ and $w$ if the following identities hold for $\H=\H(\pi, h, u-v(x))$ and $\w \H=\H(\pi,h,w-v(x))$:
\begin{equation} \label{eq:commH}
\begin{aligned}
& \H_{11} \wH_{22} - \wH_{11} \H_{22} = 0 \\
& \H_{11} \w \H_{23} + \H_{12} \w \H_{13} - \H_{13} \w \H_{12} - \H_{23} \w \H_{11} = 0 \\
& \H_{12} \w \H_{23} + \H_{22} \w \H_{13} - \H_{13} \w \H_{22} - \H_{23} \w \H_{12} = 0
\end{aligned}
\end{equation}
where $f_i$ means the derivative of $f$ with respect to the $i$-th argument.
\end{thm}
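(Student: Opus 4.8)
The plan is to evaluate $\{H_u,H_w\}$ directly from the symplectic form $\omega=\int_0^L \delta\pi\wedge\delta h\,dx$ and then to massage the resulting integral, by integration by parts, until its integrand is a combination of the three bilinear expressions in (\ref{eq:commH}), each of which is assumed to vanish. Throughout I abbreviate $s=\pa_x h$ and $t=\pi$, and write $\mu=u-v(x)$, $\nu=w-v(x)$ for the two spectral slots, so that $\H=\H(t,s,\mu)$, $\w\H=\H(t,s,\nu)$, and the subscripts $1,2,3$ denote partials in the $\pi$-, $\pa_x h$-, and spectral-arguments.

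First I would record the variational derivatives. Since $\pi$ enters algebraically while $h$ enters only through $\pa_x h$, one integration by parts gives
\[
\pp{H_u}{\pi(x)}=\H_1,\qquad \pp{H_u}{h(x)}=-\pa_x\H_2,
\]
the boundary term vanishing because $s$, $t$ and $v$ are all periodic in $x$ on the cylinder. Hence
\[
\{H_u,H_w\}=\int_0^L\big(\w\H_1\,\pa_x\H_2-\H_1\,\pa_x\w\H_2\big)\,dx .
\]
Expanding by the chain rule, $\pa_x\H_2=\H_{12}\,\pa_x t+\H_{22}\,\pa_x s-\H_{23}\,v'(x)$ and likewise for $\w\H_2$, the integrand becomes
\[
J=\mathcal P\,\pa_x s+\mathcal Q\,\pa_x t+\mathcal S\,v'(x),
\]
with $\mathcal P=\w\H_1\H_{22}-\H_1\w\H_{22}$, $\mathcal Q=\w\H_1\H_{12}-\H_1\w\H_{12}$ and $\mathcal S=\H_1\w\H_{23}-\w\H_1\H_{23}$.

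The key point is that, at each fixed $(\mu,\nu)$, the first two terms of $J$ are the pullback of the one-form $\mathcal P\,ds+\mathcal Q\,dt$ on the $(s,t)$-plane; a short computation shows that its closedness defect $\pa_t\mathcal P-\pa_s\mathcal Q$ equals, up to sign, the first identity of (\ref{eq:commH}), all third-order terms in $\H$ cancelling. Assuming that identity, there is a potential $\Psi(s,t,\mu,\nu)$ with $\Psi_s=\mathcal P$, $\Psi_t=\mathcal Q$, and the chain rule along $x\mapsto(s(x),t(x),\mu(x),\nu(x))$ yields $\mathcal P\,\pa_x s+\mathcal Q\,\pa_x t=\pa_x\Psi+v'(x)\,(\Psi_\mu+\Psi_\nu)$. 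Therefore
\[
\{H_u,H_w\}=\int_0^L\pa_x\Psi\,dx+\int_0^L v'(x)\,\big(\Psi_\mu+\Psi_\nu+\mathcal S\big)\,dx,
\]
and the first integral vanishes since $\Psi$ is periodic in $x$.

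It remains to show the residual coefficient $\Psi_\mu+\Psi_\nu+\mathcal S$ is independent of the fields. Differentiating it in $s$ and in $t$ and using $\Psi_s=\mathcal P$, $\Psi_t=\mathcal Q$, the third-order terms in $\H$ again cancel, and what survives is exactly the second and the third identity of (\ref{eq:commH}). Hence $\Psi_\mu+\Psi_\nu+\mathcal S$ is a function of $(\mu,\nu)$, i.e.\ of $x$ only through $v(x)$; consequently $v'(x)(\Psi_\mu+\Psi_\nu+\mathcal S)$ is a total $x$-derivative and integrates to zero by periodicity of $v$. This gives $\{H_u,H_w\}=0$. The only genuine computation, and the main obstacle, is this repeated cancellation of the unwanted third derivatives of $\H$: it is what forces the three reductions to collapse precisely onto the second-derivative identities (\ref{eq:commH}) rather than onto anything stronger, and so it is exactly these three identities that are equivalent to Poisson commutativity.
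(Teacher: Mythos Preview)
Your argument is correct and follows essentially the same strategy as the paper: compute the Poisson bracket as an integral over the circle, recognise the integrand as the pullback of a one-form, and show that the three identities (\ref{eq:commH}) are exactly the closedness conditions, so that closedness yields a total $x$-derivative which integrates to zero by periodicity.

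The bookkeeping differs slightly. The paper performs one more integration by parts before expanding, obtaining the symmetric coefficients $A=\H_{11}\w\H_2-\w\H_{11}\H_2$, $B=\H_{12}\w\H_2-\w\H_{12}\H_2$, $C=\H_{13}\w\H_2-\w\H_{13}\H_2$, and then treats all three variables $(\pi,\pa_xh,u-v)$ on an equal footing, so that the three closedness relations $\partial_2A-\partial_1B=0$, $\partial_3A-\partial_1C=0$, $\partial_3B-\partial_2C=0$ reduce directly to (\ref{eq:commH}). You instead keep the spectral slot separate, build the potential $\Psi$ in the $(s,t)$ variables from the first identity, and then use the second and third identities to kill the field dependence of the leftover $v'(x)$-coefficient. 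Since your coefficients $(\mathcal P,\mathcal Q,\mathcal S)$ and the paper's $(A,B,C)$ differ by an exact form (the IBP boundary), the closedness conditions coincide, and the two routes are equivalent. Your version makes more explicit why the residual piece is a total derivative in $x$; the paper's version is a line shorter.
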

\begin{remark}
The assumption that $ v(x) $ is smooth can be relaxed; for example, the computations are almost identical when $ v(x) $ is piecewise smooth, but with some subtleties that we will not discuss here.
\end{remark}
\begin{proof}

Straightforward computation gives
\begin{align} \label{eq:bracket}
 \{ H_u, H_w \} = \int_0^L \left( A \; \partial_x \pi + B \; \partial_x^2 h + C \; \partial_x (u-v(x))\right) \; dy
\end{align}
where
\begin{equation}
\begin{aligned}
A = \H_{11} \wH_2 - \wH_{11} \H_2  \\
B = \H_{12} \wH_2 - \wH_{12} \H_2 \\
C = \H_{13} \wH_2 - \wH_{13} \H_2.
\end{aligned}
\end{equation}
The integrand in (\ref{eq:bracket}) will be a total derivative if all the mixed derivatives are zero (a closed one form), that is
\begin{equation}
\begin{aligned}
\partial_2 A - \partial_1 B &= 0 \\
\partial_3 A - \partial_1 C &= 0 \\
\partial_3 B - \partial_2 C &= 0.
\end{aligned}
\end{equation}

\end{proof}

In section \ref{Thmproof} , we will prove that the identities (\ref{eq:commH}) hold for the 6-vertex model.

\subsection{Proof of Commuting Hamiltonians} \label{Thmproof}
Let us apply the analysis of integral equations in the previous section to the case of interest eqn. (\ref{eq:HqH}), reproduced here
\begin{align*} 
\mathcal{H}_{u-v(x)}^\pm(q, H) = \pm H + l_{\pm} + \int_{C} \psi_u^{\pm}(\alpha) \rho(\alpha) d\alpha
\end{align*}
where $l_\pm $ depend only on the spectral parameter. Note that when we take second derivatives, only the contributions from the integral above will remain. This integral is precisely of the form of $g(q,H)$. Using Prop. \ref{prop:gSD1} and \ref{prop:gSD2} and Lemma \ref{lem:J}, we have
\begin{alignat*}{2}
\mathcal{H}_{11} &= -\frac{4 D_-^2}{\pi} \; \; &&\text{Im} ( \xi / q ) \\
\mathcal{H}_{12} &= - 2 && \text{Re} (\xi / q )\\
\mathcal{H}_{22} &= \frac{ \pi }{D_-^2} &&  \text{Im} (\xi/ q ) \\
\mathcal{H}_{13} &= \frac{2 D_- }{\pi} && \text{Im} ( \xi )  \\
\mathcal{H}_{23} &= \frac{1}{D_-} && \text{Re}( \xi )
\end{alignat*} 
where we write $D_\pm$ for $D_\pm(B)$. As these identities hold for both $\mathcal{H}^+$ and $\mathcal{H}^-$, they hold the max as well. Note that all these quantities depend on the spectral parameter.

\begin{thm} \label{main} The identities (\ref{eq:commH}) hold for the 6-vertex model.
\end{thm}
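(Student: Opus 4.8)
The plan is to substitute the five closed-form second derivatives recorded just above into the three identities of (\ref{eq:commH}) and exhibit the cancellations directly. Everything rests on one structural observation: the contour $C$, its endpoints $A,B$, the root density $\rho$, and the auxiliary functions $D_\pm,F,R$ are all fixed by integral equations whose kernel $K=\Theta'$ and data $\Theta,p'$ contain \emph{no} spectral parameter. Hence this geometric data depends only on $(q,H)$ and is literally identical for $\H$ (spectral parameter $u-v(x)$) and $\wH$ (spectral parameter $w-v(x)$); the spectral parameter enters solely through $f=\psi^\pm(\alpha+iu)$, and that dependence is compressed into the single complex number $\xi=\xi_u$ of (\ref{eq:xi}). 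In particular $\wH_{ij}$ is obtained from $\H_{ij}$ by replacing $\xi_u$ with $\xi_w$ and changing nothing else.

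First I would assemble the five reduced formulas $\H_{11},\H_{12},\H_{22},\H_{13},\H_{23}$, using the endpoint symmetries $D_-(A)=D_-(B)$ (purely imaginary), $D_+(A)=-D_+(B)$ (real), $\rho(A)=\rho(B)^*$ and $\widetilde\xi=\xi^*$, together with Lemma \ref{lem:J}. These are exactly the expressions displayed above, and since they hold for both $\H^+$ and $\H^-$ they pass to the maximum. The point to extract is that each $\H_{ij}$ is a \emph{real-linear functional} $\ell_{ij}$ of the real pair $(\text{Re}\,\xi_u,\text{Im}\,\xi_u)$, with coefficients built only from $D_-$ and $\rho$: the three derivatives $\H_{11},\H_{22},\H_{13}$ are real multiples of $\text{Im}$-type functionals and $\H_{12},\H_{23}$ of $\text{Re}$-type functionals. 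Crucially, $\wH_{ij}=\ell_{ij}(\text{Re}\,\xi_w,\text{Im}\,\xi_w)$ applies the \emph{same} functional $\ell_{ij}$ to $\xi_w$.

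Next I would invoke the elementary wedge identity: identifying each functional with its coefficient vector, for real-linear $\ell,m$ on $\R^2$ one has $\ell(\xi_u)m(\xi_w)-\ell(\xi_w)m(\xi_u)=\det(\ell,m)\,\big(\text{Re}\,\xi_u\,\text{Im}\,\xi_w-\text{Im}\,\xi_u\,\text{Re}\,\xi_w\big)$. Because $\wH$ uses the same functionals evaluated at $\xi_w$, each line of (\ref{eq:commH}) is a sum of such antisymmetrized products and therefore factors as the common scalar $\text{Re}\,\xi_u\,\text{Im}\,\xi_w-\text{Im}\,\xi_u\,\text{Re}\,\xi_w$ times a sum of $2\times2$ determinants of coefficient vectors. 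Since that scalar is generically nonzero, the three identities reduce to the purely algebraic relations
\begin{align*}
&\det(\ell_{11},\ell_{22})=0,\\
&\det(\ell_{11},\ell_{23})+\det(\ell_{12},\ell_{13})=0,\\
&\det(\ell_{12},\ell_{23})+\det(\ell_{22},\ell_{13})=0.
\end{align*}
The first is immediate, since $\ell_{11}$ and $\ell_{22}$ are parallel (both multiples of the $\text{Im}(\,\cdot/\rho)$ functional), equivalently the ratio $\H_{11}/\H_{22}=-4D_-^4/\pi^2$ is spectral-parameter free. The remaining two I would check by inserting the explicit coefficient vectors; in each case the two determinants come out equal and opposite and cancel.

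The genuine content, and the step I expect to be the main obstacle, is precisely this cancellation in the last two relations: there no single antisymmetrized product vanishes, and one must see that two nonzero $2\times2$ determinants annihilate one another. This works because each $\partial_H$ carries a factor $D_-$ and each $\partial_q$ a factor $1+D_+$ (with a compensating $\text{Re}\leftrightarrow\text{Im}$ exchange), so that the two determinants in each relation share the common prefactor $D_-^2(1+D_+)$, respectively $D_-(1+D_+)^2$, and cancel; Lemma \ref{lem:J} is what lets one present the five mixed derivatives in the compact form used above. The real care, then, is in the bookkeeping that brings these five derivatives into their one-complex-variable form — correctly applying the conjugation symmetries at both endpoints so that $\H_{ij}$ is visibly a functional of the single vector $(\text{Re}\,\xi_u,\text{Im}\,\xi_u)$. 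Once that reduction is in hand, the three determinant identities are a one-line computation.
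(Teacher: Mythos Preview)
Your proposal is correct and follows essentially the same route as the paper: both arguments rest on the observation that $D_\pm$, $\rho$, and the contour are spectral-parameter free so that the five second derivatives are real-linear functionals of the single complex number $\xi_u$, after which the three identities are a direct algebraic check. The only cosmetic difference is that where you package the antisymmetrized products as $2\times 2$ determinants of coefficient vectors, the paper instead collapses the same terms using the product formulas $\mathrm{Im}(zw)=\mathrm{Im}\,z\,\mathrm{Re}\,w+\mathrm{Re}\,z\,\mathrm{Im}\,w$ and $\mathrm{Re}(zw)=\mathrm{Re}\,z\,\mathrm{Re}\,w-\mathrm{Im}\,z\,\mathrm{Im}\,w$ to recognize each line of (\ref{eq:commH}) as $\mathrm{Im}(\xi\widetilde\xi/\rho)-\mathrm{Im}(\widetilde\xi\xi/\rho)$ or $\mathrm{Re}(\xi\widetilde\xi/\rho)-\mathrm{Re}(\widetilde\xi\xi/\rho)$, which vanish by symmetry.
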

\begin{proof}
Start with the second equation of (\ref{eq:commH}). Substituting the above expressions for $ \mathcal{H} $ give
\begin{align*}
& \H_{11} \w \H_{23} + \H_{12} \w \H_{13} - \H_{13} \w \H_{12} - \H_{23} \w \H_{11}  \\
=& -\frac{4 D_-}{\pi} \left( \text{Im} (\xi / q ) \; \text{Re}(  \w \xi ) + \text{Re} (\xi / q ) \; \text{Im} (\w \xi ) - (\;  \w\cdot \leftrightarrow \cdot \; ) \right) \\
=& \frac{4 D_-}{\pi} \left( \text{Im} (\xi  \w \xi / q )  -   \text{Im} (\w \xi  \xi / q )    \right) = 0.
\end{align*}
Here we used the fact that $q$ and $H$ are fixed (the same for $\H$ and $\w \H$), and that the roots of the Bethe equations do not depend on the spectral parameter $ u $ (so that $D_-$ does not depend on $u$).

Similarly, for the third equation
\begin{align*}
& \H_{12} \w \H_{23} + \H_{22} \w \H_{13} - \H_{13} \w \H_{22} - \H_{23} \w \H_{12} \\
&= -\frac{2}{D_-} \left( \text{Re} (\xi / q ) \; \text{Re}( \w \xi ) - \text{Im} (\xi/ q ) \; \text{Im} (\w \xi ) - ( \; \w\cdot \leftrightarrow \cdot \; ) \right) \\
&= -\frac{2}{D_-} \left( \text{Re} (\xi \w \xi / q )  -  \text{Re} (\xi \w \xi / q )  \right) = 0.
\end{align*}

The first equation follows from commutativity of homogeneous Hamiltonians. Reproducing this for completeness, we have
\begin{align*}
    &\H_{11} \w \H_{22} - \w \H_{11} \H_{22} \\
    & = -\frac{4D_-^2}{\pi} \text{Im} ( \xi / q) \frac{\pi}{D_-^2} \text{Im} (\w \xi / q) +  -\frac{4D_-^2}{\pi} \text{Im} (\w \xi / q) \frac{\pi}{D_-^2} \text{Im} (\xi / q) \\
    & = -4 \text{Im} ( \xi / q)\text{Im} (\w \xi / q) + 4 \text{Im} (\w \xi / q) \text{Im} (\xi / q) \\
    & = 0.
\end{align*}
\end{proof}

\section{Concluding remarks}\label{concl}

We proved the Poisson commutativity of the family of Hamiltonians (\ref{main-Ham}). It implies that the Euler-Lagrange equations, regarded as an evolution equation in Euclidean time, have infinitely many conservation laws. Thus, we can conjecture that the equations describing the 6-vertex model with integrable inhomogeneities in the horizontal direction, and which is homogeneous in the vertical direction, is integrable. 

When the model is inhomogeneous in the vertical direction, i.e. when $u(y)$ is not a constant function of $y$, we have the same conservation laws for the time dependent Hamiltonian
\[
H_{u(y)}(\pi,h) = \int_0^L \mathcal{H}_{u(y)-v(x)}(\pa_xh(x), \pi(x))dx.
\]
This fact has a simple illustration in the finite dimensional case. 

Assume $(M,\omega)$ is a symplectic manifold and and $H_1,\dots, H_n$ are Poisson commuting functions on $M$. Consider a time dependent Hamiltonian 
\[
H_t(x)=f(H_1(x), \dots, H_n(x), t)
\]
where $f$ is a smooth function. It is easy to check that functions $H_1, \dots, H_n$ remain constant along Hamiltonian flow lines generated by $H_t$, i.e. along solutions to differential equations
\[
\frac{dx^i(t)}{dt}=\sum_j (\omega^{-1}(x(t))^{ij}\frac{\pa H_t }{\pa x^i} (x(t)
\]
where this is written in local coordinates $x^i$ on $M$ and $\omega^{-1}(x)$ is the inverse matrix to the tensor $\omega_{ij}(x)$ representing  $\omega(x)=\sum_{ij}\omega_{ij}(x)dx^i\wedge dx^j$, $\sum_j \omega^{-1}(x)^{ij}\omega_{jk}(x)=\delta^i_k$.

We were focusing on the case $\Delta<-1$ and small  inhomogeneities. Extending these results to $-1\leq \Delta\leq 1$ and to $\Delta>1$ is straightforward. In the describing the Hamiltonian framework we were assuming smoothness of $\mathcal{H}_u(q,H)$ in $q$ and $H$ which is not alway true. However, this assumption is valid when the values of $(\pa_xh,\pa_yh)$ are in the disordered region. This can always be achieved by choosing appropriate initial  and target values of the height function. 

The very interesting  problem of finding out whether the Hamiltonian flow in question is really integrable remains. We conjecture that this is the case. To prove this one should find corresponding action angle variables as in other Hamiltonian systems \cite{FT1}.

\end{document}